\def\maxwidth{ %
  \ifdim\Gin@nat@width>\linewidth
    \linewidth
  \else
    \Gin@nat@width
  \fi
}
\definecolor{fgcolor}{rgb}{0.345, 0.345, 0.345}
\definecolor{shadecolor}{rgb}{.97, .97, .97}
\definecolor{messagecolor}{rgb}{0, 0, 0}
\definecolor{warningcolor}{rgb}{1, 0, 1}
\definecolor{errorcolor}{rgb}{1, 0, 0}
\newenvironment{knitrout}{}{} 
\newcommand{\SweaveOpts}[1]{}  
\newcommand{\SweaveInput}[1]{} 
\newcommand{\Sexpr}[1]{}       
\newcommand{\eg}{e.\,g.\xspace}
\newcolumntype{C}[1]{>{\centering\let\newline\\\arraybackslash\hspace{0pt}}m{#1}}
\newcolumntype{R}[1]{>{\raggedleft\let\newline\\\arraybackslash}m{#1}}
\newcolumntype{L}[1]{>{\raggedright\let\newline\\\arraybackslash}m{#1}}
\newcommand{\removelatexerror}{\let\@latex@error\@gobble}
\DeclareMathOperator*{\argmax}{\arg\!\max}
\newtheorem{AdaptiveE}{Lemma}[section]
\newcommand{\Inc}{{++}}
\newcommand{\Dec}{{--}}
\newcommand{\Remi}[1]{\tcp*[f]{#1}}
\begin{document}
\npthousandsep{\ }
\npdecimalsign{.} 

\title{\Large $k$-way Hypergraph Partitioning via $n$-Level Recursive Bisection}
\author{Sebastian Schlag\thanks{Karlsruhe Institute of Technology, Institute for Theoretical
Informatics, Algorithmics II. Email: \{sebastian.schlag, meyerhenke, sanders, christian.schulz\}@kit.edu,
tobias.heuer@gmx.net, vitali.henne@gmail.com} 
\and 
Vitali Henne$^*$
\and 
Tobias Heuer$^*$
\and 
Henning Meyerhenke$^*$
\and 
Peter Sanders$^*$
\and 
Christian Schulz$^*$ }
\date{}

\maketitle

\pagestyle{headings}
\pagenumbering{arabic}
\setcounter{page}{1}

\begin{abstract} \small\baselineskip=9pt
We develop a multilevel algorithm for hypergraph partitioning that contracts the vertices one at a time. Using several caching and lazy-evaluation techniques during coarsening and refinement, we reduce the running time by up to two-orders of magnitude compared to a naive $n$-level algorithm that would be adequate for ordinary graph partitioning. The overall performance is even better than the widely used hMetis hypergraph partitioner that uses a classical multilevel algorithm with few levels.
Aided by a portfolio-based approach to initial partitioning and adaptive budgeting of imbalance within recursive bipartitioning, we achieve very high quality.
We assembled a large benchmark set with 310 hypergraphs stemming from
application areas such VLSI, SAT solving, social networks, and scientific computing. 
We achieve significantly smaller cuts than hMetis and PaToH, while being faster than hMetis.
Considerably larger improvements are observed for some instance classes like social networks, for bipartitioning, and for
partitions with an allowed imbalance of 10\%.
The algorithm presented in this work forms the basis of our hypergraph
partitioning framework \emph{KaHyPar} (\textbf{Ka}rlsruhe \textbf{Hy}pergraph \textbf{Par}titioning).
\end{abstract}

\section{Introduction} \label{Introduction}
Hypergraphs are a generalization of graphs, where each (hyper)edge can connect more than two vertices.
The $k$-way hypergraph partitioning problem is the generalization of the well-known graph partitioning problem:
Partition the vertex set into $k$ disjoint blocks of bounded size
(at most $1+\varepsilon$ times the average block size), while minimizing the total cut size, i.e., the sum of the weights of those hyperedges that connect multiple blocks.
However, allowing hyperedges of arbitrary size makes the partitioning problem more difficult in practice~\cite{Application:DistributedDB,Heintz:2014}.

Hypergraph partitioning (HGP) has a wide range of applications. Two prominent areas are 
VLSI design and scientific computing (\eg accelerating sparse matrix-vector multiplications)~\cite{Papa2007}.
While the former is an example of a field where small optimizations can lead to significant savings, the latter
is an example where hypergraph-based modeling is more flexible than graph-based approaches~\cite{Heintz:2014,PaToH,Hendrickson20001519,10.1371/journal.pcbi.1000385}. HGP also finds application as a preprocessing step in SAT solving, where it is used to identify groups of connected variables~\cite{DBLP:journals/jucs/AloulMS04}.

Since hypergraph partitioning is NP-hard~\cite{Lengauer:1990} and since it is even NP-hard to find good approximate solutions for graphs~\cite{DBLP:journals/ipl/BuiJ92},
heuristic algorithms are used in practice. The most commonly used heuristic is the \emph{multilevel paradigm}~\cite{MultiLevel_Bui,MultiLevel_Cong,MultiLevel_Hauck,MultiLevel_Hendrickson}. It consists
of three phases: In the \emph{coarsening phase}, the hypergraph is recursively coarsened to obtain a hierarchy of smaller hypergraphs that reflect the basic structure of the input. After applying an \emph{initial partitioning} algorithm to the smallest hypergraph in the second phase, coarsening is 
undone and, at each level, a \emph{local search} method is used to improve the partition induced by the coarser level.

State-of-the-art hypergraph partitioners use matching- or clustering-based algorithms to find 
groups of highly-connected vertices that can be contracted together to create the next level of the 
coarsening hierarchy~\cite{PaToH,hMetisRB,trifunovic2006parallel}. The rate at which successively coarser hypergraphs are reduced determines 
the number of levels in the multilevel hierarchy. As already noted in \cite{MLPart}, a larger number of levels potentially improves the solution quality, 
because local search algorithms are used more often. However, it also leads to larger running times and increased memory usage.

In this paper we explore the idea to evade this trade-off by going to the extreme case of (nearly) $n$ levels and removing
only a single vertex between two levels. 
Implementing this in a sophisticated way, we are able to combine high quality with good performance.
Our system turns out to be faster than the widely used hMetis system (see Section~\ref{Experiments}).

\paragraph{Outline and Contribution.} 
After giving a brief overview of related work in \autoref{RelatedWork} and introducing basic notation in \autoref{Preliminaries}, we explain how to compute a $k$-way partition via recursive bisection in \autoref{kwayRb}. 
We show how to derive good values for the balance constraint of each bipartitioning subproblem so that in the end, the balance constraint for the $k$-way partition is fulfilled. 
\autoref{Coarsening} describes our coarsening strategy. 
A carefully chosen rating function evaluates how attractive it is to 
contract two vertices. We present an effective strategy to limit the cost for reevaluating the rating function. 

The initial partitioner described in \autoref{InitialPartitioning} is based on a large portfolio of simple algorithms, each with some randomization aspect (fully random, BFS, label propagation, and nine variants of greedy hypergraph growing). 
Since these partitioners are very fast and only applied to a small core problem, we can afford to make a large number of attempts taking the best partition as the basis for further processing. 

Local improvement steps are expensive since many steps are needed and a naive implementation of the established techniques needs work proportional to the \emph{squares} of the net sizes. We integrate several techniques for reducing this bad behavior for large nets and additionally develop a way to \emph{cache} gain values to further reduce search overhead in \autoref{localizedFM}.
In \autoref{hypergraphDS} we introduce a hypergraph data structure that supports fast contraction and uncontraction of vertex pairs.
We evaluate our algorithm and the main competitors on a broad range of hypergraphs derived from well established benchmark sets. 
The experiments reported in \autoref{Experiments} indicate that our algorithm computes partitions that are significantly smaller than hMetis and PaToH with considerably larger improvements for special instance classes like social networks, for bipartitioning, and partitions with an allowed imbalance of 10\%. At the same time, our algorithm is faster than hMetis.
\autoref{Conclusions} concludes the paper.

\section{Related Work} \label{RelatedWork}
Since the 1990s HGP has evolved into a broad research area. We therefore refer to  \cite{Papa2007,trifunovic2006parallel,Alpert19951,DBLP:conf/dimacs/2012} for an extensive overview. Here, we focus on issues closely related to the contributions of our paper.
The two most widely used general-purpose tools are PaToH~\cite{PaToH} (originating from scientific computing) 
and hMetis~\cite{hMetisRB,hMetisKway} (originating from VLSI design). 
Other software packages with certain distinguishing characteristics are known, in particular Mondriaan~\cite{Mondriaan} (sparse matrix 
partitioning), MLPart~\cite{MLPart} (circuit partitioning), Zoltan~\cite{Zoltan} and Parkway~\cite{Parkway2.0} (parallel), 
and UMPa~\cite{DBLP:conf/dimacs/CatalyurekDKU12} (directed hypergraph model, multi-objective).
All of these tools are based on the multilevel paradigm and compute a $k$-way partition either directly~\cite{hMetisKway,Parkway2.0,DBLP:conf/dimacs/CatalyurekDKU12,Aykanat:2008} or via recursive bisection~\cite{PaToH,hMetisRB,MLPart,Mondriaan,Zoltan}. 
The two most popular local search approaches are greedy algorithms~\cite{hMetisKway,DBLP:conf/dimacs/CatalyurekDKU12} or variations of
the Fiduccia-Mattheyses (FM) heuristic~\cite{FM82}. FM-type algorithms move vertices to other 
blocks in the order of improvements in the objective. Unlike simple greedy methods, FM can escape
local optima to some extent since it allows to worsen the objective temporarily. Partitioners based on recursive bisection use FM-based local search algorithms~\cite{PaToH,hMetisRB,MLPart,Mondriaan,Zoltan}, while direct $k$-way hypergraph partitioners employ greedy methods~\cite{hMetisKway,Parkway2.0,DBLP:conf/dimacs/CatalyurekDKU12,Aykanat:2008}.

$n$-level algorithms have been used in geometric data structures based on randomized incremental construction~\cite{nLevelGeometryNearestNeighbour,nLevelGeometryDelaunay} and as a preprocessing technique for route planning~\cite{nLevelCH}. More specifically, KaHyPar owes many basic ideas to its graph partitioning (GP) ancestor KaSPar~\cite{nGP}. However,
the implementation, the required design choices, and even the overall outcome are very different. KaSPar is a direct $k$-way partitioner that suffers from data structure overheads and the difficulty to integrate advanced global improvement methods based on flows \cite{SanSch11} and shortest paths \cite{SanSch13}. In contrast, KaHyPar is based on recursive bipartitioning and currently seems to be the method of choice for a wide range of hypergraph partitioning tasks. In particular, it is actually among the fastest codes available -- it seems that the overheads of a dynamic graph data structure are outweighed by many other complex issues in hypergraph partitioning that are unaffected by an $n$-level approach. There is a previous attempt on hypergraph bipartitioning in a Bachelor thesis \cite{ZieglerNHGP} with the approach to contract one hyperedge in each level.
However, that system was very slow so that we decided to start from scratch. Our second attempt was a direct $k$-way $n$-level partitioner
\cite{DirectKwayNhgp}. Despite several interesting ideas and best quality in the majority of experiments,
the $k$-way algorithm has not been able to improve on the state of the art consistently in terms of the time-quality trade-off.
However, we learned from that paper that recursive bipartitioning seems to be advantageous and thus decided to first focus on the highly optimized $n$-level recursive bipartitioner presented here. 

\section{Preliminaries} \label{Preliminaries}
An \textit{undirected hypergraph} $H=(V,E,c,\omega)$ is defined as a set of $n$ vertices $V$ and a
set of $m$ hyperedges $E$ with vertex weights $c:V \rightarrow \mathbb{R}_{\geq0}$ and hyperedge 
weights $\omega:E \rightarrow \mathbb{R}_{>0}$, where each hyperedge is a subset of the vertex set $V$ (i.e., $e \subseteq V$). In the
HGP literature, hyperedges are also called \emph{nets} and the vertices of a net are called \emph{pins} \cite{PaToH}.
We extend $c$ and $\omega$ to sets, i.e., $c(U) :=\sum_{v\in U} c(v)$ and $\omega(F) :=\sum_{e \in F} \omega(e)$.
A vertex $v$ is \textit{incident} to a net $e$ if $ \{v\} \subseteq e$. $\mathrm{I}(v)$ denotes the set of all incident nets of $v$. 
The \textit{degree} of a vertex $v$ is $d(v) := |I(v)|$.
Two vertices are \textit{adjacent} if there exists a net $e$ that contains both vertices. The set $\Gamma(v) := \{ u~|~\exists~e \in E : \{v,u\} \subseteq e\}$ denotes the neighbors of $v$.
The \textit{size} $|e|$ of a net $e$ is the number of its pins. Nets of size one are called \emph{single-node} nets. If $e_i = e_j$ we call nets $e_i$ and $e_j$ \emph{parallel}.

A \emph{$k$-way partition} of a hypergraph $H$ is a partition of its vertex set into $k$ \emph{blocks} $\mathrm{\Pi} = \{V_1, \dots, V_k\}$ 
such that $\bigcup_{i=1}^k V_i = V$, $V_i \neq \emptyset $ for $1 \leq i \leq k$ and $V_i \cap V_j = \emptyset$ for $i \neq j$. We use $b[v]$ to refer to the block id of vertex $v$.
A $k$-way partition decomposes the hypergraph into $k$ \emph{section hypergraphs}~\cite{Berge:75} $H \times V_i :=(V_i, \{e \in E~|~e \subseteq V_i\})$.
We call a $k$-way partition $\mathrm{\Pi}$ \emph{$\mathrm{\varepsilon}$-balanced} if each block $V_i \in \mathrm{\Pi}$ satisfies the \emph{balance constraint}:
$c(V_i) \leq L_{max} := (1+\varepsilon)\lceil \frac{c(V)}{k} \rceil$ for some parameter $\mathrm{\varepsilon}$. 
We call a block $V_i$ \emph{overloaded} if $c(V_i) > L_{max}$ and \emph{underloaded} if $c(V_i) < L_{max}$. Given a $k$-way partition $\mathrm{\Pi}$, the number of pins of a net $e$ in block $V_i$ is defined as 
$\mathrm{\Phi}(e,V_i) := |\{v \in V_i~|~v \in e \}|$. Net $e$ \emph{is connected} to block $V_i$ if $\mathrm{\Phi}(e, V_i) > 0$. Similarly, a block $V_i$ is \emph{adjacent} to a vertex $v \notin V_i$ if $\exists~e \in  I(v): \mathrm{\Phi}(e, V_i) > 0$. $\mathrm{R}(v)$ denotes the set of all blocks adjacent to $v$.
We call a net \emph{internal} if $\mathrm{ \mathrm{\Phi}(e, i) = |e|}$ for one block $V_i$ and \emph{cut} net otherwise. Analogously, a vertex contained in at least one cut net is called \emph{border vertex}. 
The \emph{$k$-way hypergraph partitioning problem} is to find an $\varepsilon$-balanced $k$-way partition of a hypergraph $H$ that minimizes the \emph{total cut} $ \omega(E')$ for some $\varepsilon$,  where $E'$ is
the set of all cut nets. This problem is known to be NP-hard \cite{Lengauer:1990}.

\emph{Contracting} a pair of vertices $(u, v)$ means merging $v$ into $u$. We refer to $u$ as the \emph{representative} and $v$ as the \emph{contraction partner}. 
The weight of $u$ becomes $c(u) := c(u) + c(v)$.  We connect $u$ to the former neighbors $\Gamma(v)$ of $v$, by replacing 
$v$ with $u$ in all nets $e \in I(v) \setminus I(u)$. Furthermore we remove $v$ from all nets $e \in I(u) \cap I(v)$.
\emph{Uncontracting} a vertex $u$ reverses the contraction. The uncontracted vertex $v$ is put in the same 
block as $u$ and the weight of $u$ is set back to $c(u) := c(u) - c(v)$.

\section{$n$-Level Hypergraph Partitioning}
We now present our main contributions. A high-level overview of our $n$-level hypergraph partitioning framework is provided in Algorithm~\ref{alg:nHGP}. 
In \autoref{kwayRb}, we start by explaining how to compute a $k$-way partition via recursive bisection.
As other multilevel algorithms our algorithm has a coarsening, initial partitioning and an uncoarsening phase.
During the coarsening phase, we successively shrink the hypergraph by contracting only \emph{a single pair} of vertices \emph{at each level},
until it is small enough to be initially partitioned. We describe the details of our coarsening
algorithm in \autoref{Coarsening} and briefly discuss our portfolio-based initial partitioning approach in \autoref{InitialPartitioning}. The initial solution is transfered
to the next finer level by performing a \emph{single} uncontraction step. Afterwards, our localized local search algorithm described in
\autoref{localizedFM} is used to further improve the solution quality. All algorithms use the hypergraph data structure described in
\autoref{hypergraphDS}.

\subsection{$k$-way Partitioning via Recursive Bisection.} \label{kwayRb}
\begin{algorithm2e}[t!]
\caption{Algorithm Overview}\label{alg:nHGP}\normalsize
\LinesNumberedHidden
\SetKwFunction{partition}{partition}\SetKwFunction{proc}{proc}
\SetKwProg{myalg}{Algorithm}{}{}
\KwIn{Hypergraph $H$, lowest block id $k_{l}$, highest block id $k_{h}$, imbalance parameter $\varepsilon$.}

\myalg{\partition{$H:=(V,E), \varepsilon, k_{l}, k_{h}$}}{
  $k=k_{h}-k_{l}+1$\\
  \textit{// partition $H$ into $k$ blocks with block ids $k_l, ... ,k_h$}
  $ \mathrm{\Pi}_k := \emptyset$ \\
  \lIf() {$k_{l} = k_{h}$} {
    $\mathrm{\Pi}_k := V$; \Return $ \mathrm{\Pi}_k$
  }
 
  $\varepsilon' := $ \textit{ calculate according to Theorem 4.1}\\
  \textit{// coarsening phase} \\
  \While(){$H$ is not small enough} {
    \textit{// choose vertex pair with highest rating} \\
    $(u,v) := \argmax_{u \in V} score(u)$
    $H := \FuncSty{contract}(H,u,v)$ \Remi{$H := H \setminus \{v\}$}  
  }
  
  \textit{// initial partitioning phase}
  $\mathrm{\Pi}_2=(V_0,V_1) := \FuncSty{computeBisection}(H,\varepsilon')$ 
  
  \textit{// uncoarsening and local search phase}\\
  \While()
  {$H$ is not completely uncoarsened} {
    $(H, \mathrm{\Pi}_2, u,v) := \FuncSty{uncontract}(H,\mathrm{\Pi}_2)$
    
    $(H, \mathrm{\Pi}_2) := \FuncSty{localSearch}(H, \mathrm{\Pi}_2, u, v, \varepsilon')$
  }
  \textit{// recurse on section hypergraphs}
  $\mathrm{\Pi}_k := \mathrm{\Pi}_k~\cup~$\partition{$H\times V_0, k_{l}, k_{l} + \lfloor k/2 \rfloor -1, \varepsilon$}\\
  $\mathrm{\Pi}_k := \mathrm{\Pi}_k~\cup~$\partition{$H\times V_1, k_{l} + \lfloor k/2 \rfloor, k_{h}, \varepsilon$ }
  return $\mathrm{\Pi}_k$
}{}
\KwOut{$\varepsilon$-balanced $k$-way partition $\mathrm{\Pi}=\{V_1, \dots, V_k\}$}
\end{algorithm2e}
There are two approaches for computing a $k$-way partition within the multilevel framework. 
In \emph{direct} $k$-way partitioning, the initial partitioning algorithm computes a $k$-way partition, which is then
improved during uncoarsening using $k$-way local search algorithms. The most commonly used approach in HGP, however,
is to use recursive bisection~\cite{TR-02-025}. If $k$ is a power of two, the final $k$-way partition is
obtained by first computing a bisection of the initial hypergraph and then recursing on each of the two blocks.
Thus it takes $\log_2(k)$ such phases until the hypergraph is partitioned into $k$ blocks. 
If $k$ is not a power of two, the approach has to be adapted to produce appropriately sized partitions.
Our algorithm uses the following technique to compute a $k$-way partition via recursive bisection for arbitrary values of $k$:
We bisect the hypergraph such that one block has a maximum weight of $(1+\varepsilon') \lceil \lfloor k/2 \rfloor/k~c(V) \rceil$
and the other block has a maximum weight of $(1+\varepsilon') \lceil \lceil k/2 \rceil/k~c(V) \rceil$,
where $\varepsilon'$ is an adapted imbalance parameter that ensures that the final $k$-way partition is $\varepsilon$-balanced.
The former block is then partitioned recursively into $k' := \lfloor k/2 \rfloor$ blocks, while the latter is partitioned into $k' := \lceil k/2 \rceil$ blocks.
After each bisection step, we therefore have to solve two $k'$-way hypergraph partitioning problems.
The new imbalance parameter $\varepsilon'$ is chosen according to the following lemma:

\begin{AdaptiveE}
Let $H_0= H \times V_0 $ and $H_1= H \times V_1$ be the section hypergraphs induced by a bipartition $\mathrm{\Pi} = \{V_0,V_1\}$ of a hypergraph $H=(V,E,c,\omega)$ for
which we wish to compute an $\varepsilon$-balanced $k$-way partition. Using an adaptive imbalance parameter 
\[
\varepsilon' := \left( \left(1+\varepsilon \right) \frac{k' \cdot c(V)}{k \cdot c(V_i)}\right)^{\frac{1}{\lceil \log_2(k') \rceil}} -1
\]
to compute a $k'$-way partition (with $k'\geq 2$) of a hypergraph $H_i$ via recursive bisection ensures that the final $k$-way partition of $H$ is $\varepsilon$-balanced. When computing the very first bisection for a $k$-way partition, we set $H_0=H$, $k'=k$ and therefore  $\varepsilon' := (1+\varepsilon)^{(1/\lceil \log_2(k) \rceil)} -1$. 
\end{AdaptiveE}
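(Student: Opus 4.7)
The plan is to induct on the recursion depth $d := \lceil \log_2(k') \rceil$, maintaining the invariant that whenever the algorithm recurses on a subhypergraph $H_i$ with weight $W := c(V_i)$ to be split into $k'$ blocks, one has $W \leq (1+\varepsilon)\,k'\,c(V)/k$. At the root call ($H_i = H$, $k' = k$) this reads $c(V) \leq (1+\varepsilon)\,c(V)$, which is trivial. At a leaf of the recursion tree ($k' = 1$, no further bisection) the invariant specializes to $W \leq (1+\varepsilon)\,c(V)/k \leq (1+\varepsilon)\lceil c(V)/k\rceil = L_{max}$, which is exactly the balance bound required of a final block. Thus it suffices to prove that every bisection preserves the invariant.

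For the inductive step, consider a node with parameters $(W, k')$ and $d \geq 1$. Applying the prescribed $\varepsilon'$ in a bisection produces two child blocks of weights $W_0, W_1$ with $W_j \leq (1+\varepsilon')(k'_j/k')\,W$ (suppressing the outer ceilings for the moment), where $k'_j \in \{\lfloor k'/2\rfloor, \lceil k'/2\rceil\}$ is the block-count of the corresponding recursive call. The invariant we must verify at the child is $W_j \leq (1+\varepsilon)\,k'_j\,c(V)/k$, which after cancelling $k'_j$ reduces to $(1+\varepsilon')\,W \leq (1+\varepsilon)\,k'\,c(V)/k$. Using the defining identity $(1+\varepsilon')^d = (1+\varepsilon)\,k'\,c(V)/(k\,W)$ read off the formula in the lemma, this is equivalent to $(1+\varepsilon') \leq (1+\varepsilon')^d$, i.e.\ to $(1+\varepsilon')^{d-1} \geq 1$, which holds whenever $\varepsilon' \geq 0$. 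But $\varepsilon' \geq 0$ is precisely the statement that $W \leq (1+\varepsilon)\,k'\,c(V)/k$, i.e.\ the invariant at the current node. Hence the invariant propagates to both children, closing the induction.

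Two observations round this off. First, the depth $\lceil \log_2 k'_j\rceil$ of a child can drop by two instead of one when $k'$ is odd (\eg $k' = 5$, $k'_j = 2$), but the argument never uses a specific decrement of $d$, only preservation of the invariant across one bisection. The special case $\varepsilon' = (1+\varepsilon)^{1/\lceil \log_2 k\rceil} - 1$ announced for the initial call is recovered by setting $V_i = V$ and $k' = k$ in the general expression. Second---and this is the step I expect to be fiddly rather than conceptually hard---the per-level ceilings $\lceil (k'_j/k')\,W\rceil$ introduce an additive rounding error of at most one per level, so the accumulated slack over $d$ levels must be absorbed by the gap between $(1+\varepsilon)\,c(V)/k$ and the actual cap $L_{max} = (1+\varepsilon)\lceil c(V)/k\rceil$. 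In the realistic regime $c(V)/k \gg 1$ typical for hypergraph partitioning this is routine, but a fully tight accounting would require strengthening the invariant to an affine bound of the form $W \leq (1+\varepsilon)(k'/k)\,c(V) + O(d)$ and redoing the same calculation with the additive error tracked explicitly.
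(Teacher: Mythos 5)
Your proof is correct (to the same standard of rigor as the paper's own, which is explicitly only an outline) but it is organized genuinely differently. The paper argues \emph{backwards}: it relaxes the balance cap to $L_{max}' := (1+\varepsilon)\frac{c(V)}{k} \leq L_{max}$, considers a block $V_{max}$ that participates in $\lceil\log_2(k')\rceil$ bisections, assumes it receives the maximum admissible weight at every step with a \emph{fixed} $\varepsilon'$ along the whole path, obtains the compounded bound $(1+\varepsilon')^{\lceil\log_2(k')\rceil}\,c(V_i)/k'$, and solves $c(V_{max}) \leq L_{max}'$ for $\varepsilon'$ --- a worst-case derivation that explains where the formula comes from but silently glosses over the fact that the algorithm \emph{recomputes} $\varepsilon'$ from the actual block weight at every recursive call. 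Your forward induction with the invariant $W \leq (1+\varepsilon)\,k'\,c(V)/k$, preserved across a single bisection via the identity $(1+\varepsilon')^{d} = (1+\varepsilon)\,k'\,c(V)/(k\,W)$ and the observation that $\varepsilon' \geq 0$ is \emph{equivalent} to the invariant itself, proves soundness of the adaptive scheme as actually implemented, and --- as you note --- is insensitive to the depth decrement, so the case of odd $k'$ where $\lceil\log_2 k'_j\rceil$ drops by two (which a naive depth-times-compounding count would have to handle separately) comes for free. On the rounding issue you flag: the paper's proof drops the inner ceilings $(1+\varepsilon')\lceil(k'_j/k')\,W\rceil$ without comment, so your explicit acknowledgment, together with the sketch of how to absorb the $O(d)$ additive slack by strengthening the invariant, makes your treatment the more honest of the two rather than the weaker one. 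In short: same key inequality, but the paper's version is a derivation of the formula while yours is a verification of the algorithm, and yours closes two small gaps (adaptivity and uneven recursion depth) that the paper leaves implicit.
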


\begin{proof}
(Outline) To show that using $\varepsilon'$ at each bisection step ensures an $\varepsilon$-balanced $k$-way partition,
 we use a maximum block weight $L_{max}' := (1+\varepsilon) \frac{c(V)}{k} \leq L_{max}$. If the weight of each of the $k'$ blocks of the $k'$-way partition is below $L_{max}'$, then
the final $k$-way partition of $H$ is $\varepsilon$-balanced. To ensure this, we have to determine the maximum possible weight one of these blocks can have.
Because $H_i$ is split at each bisection such that one block can be further divided into $\lfloor k'/2 \rfloor$ blocks
while the other is further split into $\lceil k'/2 \rceil$ blocks, the vertices of at least two blocks in the final $k'$-way partition have to be part of $\lceil \log_2(k') \rceil$ bisections.
Let $V_{max}$ be such a block and assume without loss of generality that at each bisection step, block $V_{max}$ has the maximum possible weight.
Using the initial imbalance parameter $\varepsilon$ at each bisection step would therefore result in a final block weight of
\begin{equation} \label{WCblockWeight}
c(V_{max}) := (1+\varepsilon)^{\lceil\log_2(k')\rceil}~\frac{c(V_i)}{k'} 
\end{equation}
In order to ensure that the original $k$-way partition of $H$ is $\varepsilon$-balanced, we therefore have to choose $\varepsilon$ in \autoref{WCblockWeight} such that
$c(V_{max}) \leq L_{max}'$.

Thus when recursively partitioning a section hypergraph $H_i$ with weight $c(V_i)$ into $k'$ blocks we choose a new imbalance parameter $\varepsilon'$ as follows:
\begin{equation}
\begin{split}
(1 + \varepsilon')^{\lceil \log_2(k') \rceil} \frac{c(V_i)}{k'} &\leq L_{max}' :=  (1+\varepsilon) \frac{c(V)}{k} \\
\Rightarrow \varepsilon' &\leq \left((1+\varepsilon)\frac{k'~c(V)}{k~c(V_i)}\right)^{\frac{1}{\lceil \log_2(k') \rceil}} -1
\end{split}
\end{equation}
\end{proof}

Note that by definition, the section hypergraphs do not contain any cut nets. The cut nets of each bipartition can be discarded, 
since they will always be cut nets in the final $k$-way partition and already contribute $\omega(e)$ to the 
total cut size~\cite{PaToHManual}. This simultaneously reduces the number of nets as well as their average size in each section hypergraph,
without affecting the partitioning objective.

\subsection{Coarsening.} \label{Coarsening}
The goal of the coarsening phase is to contract highly connected vertices such that the number of nets remaining in the hypergraph and their size is
successively reduced \cite{hMetisRB}. Removing nets leads to simpler instances for initial partitioning, while small net sizes allow
 FM-based local search algorithms to identify moves that improve solution quality. 
Thus, we choose the vertex pairs $(u,v)$ to be contracted according to a rating function. 
More precisely, we adopt the \emph{heavy-edge} rating
 function also used by hMetis~\cite{hMetisRB}, Parkway~\cite{Parkway2.0} and PaToH~\cite{PaToHManual}, which prefers vertex 
pairs that have a large number of heavy nets with small size in common. 
However, in contrast to these tools, we additionally scale this score inversely with the product of the vertex weights $c(v)$ and $c(u)$ to keep the vertex weights of the coarse hypergraph reasonably uniform: 

\begin{equation}
r(u,v) := \frac{1}{{c(v) \cdot c(u)}}~\sum \limits_{e \in \{I(v) \cap I(u)\}}  \frac{\omega(e)}{|e| - 1}.
\end{equation}
This is also done in the matching-based coarsening algorithm of MLPart~\cite{MLPart} for similar reasons.

\paragraph{Algorithm Outline.}
At the beginning of the coarsening algorithm, all vertices are rated, i.e., for each vertex $u$ we 
compute the ratings of all neighbors $\mathrm{\Gamma}(u)$ and choose the vertex $v$ with the highest rating as contraction partner for $u$.
Ties are broken randomly to increase diversification. 
For each vertex, we then insert the vertex pair with the highest score into an addressable priority queue (PQ) using the rating score as key. 
This allows us to efficiently choose the best-rated vertex pair that should be contracted next. In each iteration, we remove the pair $(u,v)$ with the highest score and
contract it. After contraction, the entry of $v$ is removed from the PQ, since $v$ is no longer contained in the hypergraph. 

A contraction operation can lead to parallel nets (i.e., nets that contain exactly the same vertices) and single-node nets in $I(u)$.
In order to reduce the running time of the coarsening algorithm, we remove these nets from the hypergraph. Single-node nets are easily identified, because $|e|=1$. 
In case of parallel nets, we remove all but one from $H$. The weight of the remaining net $e$ is set to the sum of the weights of the nets that were parallel to $e$. Parallel nets are detected using an algorithm similar to the one in \cite{ParallelHEDetection}, which identifies vertices with identical structure in a graph: 
We create a fingerprint for each net $e \in \mathrm{I}(u)$: $f_i := \bigoplus_{v \in e} v\oplus x$, for some seed $x$.\footnote{$\oplus$ is the bitwise XOR} 
These fingerprints are then sorted and a final scan then identifies parallel nets: If two consecutive fingerprints $f_i, f_j$ are identical, we check whether the nets are truly parallel by comparing their pins\footnote{In principle, this approach could be accelerated using hashing. Parallel net detection, however, did not significantly contribute to the overall running time of our algorithm.}. 
Each contraction potentially influences the ratings of some neighbors $\mathrm{\Gamma}(u)$. The \emph{full} re-rating strategy therefore recalculates these ratings 
and updates the PQ accordingly. To avoid imbalanced inputs for the initial partitioning phase, vertices $v$ 
with $c(v) > c_{max} :=  s  \cdot \lceil \frac{c(V)}{t} \rceil$ are never allowed to participate in a contraction step and are thus removed from the PQ.
The coarsening process is stopped as soon as the number of vertices drops below $t$ or no eligible vertex is left. 
Parameter $s$ is used to favor the contraction of highly connected vertices.
Both parameters will be chosen in \autoref{Methodology}.

\paragraph{Advanced Update Strategy.}
Continuously re-rating the neighbors $\mathrm{\Gamma}(u)$ of a representative $u$ is the most expensive part of the algorithm: After each contraction, 
we have to look at all pins of all incident nets $I(u)$. The re-rating can therefore easily become the bottleneck -- especially if $H$ contains large nets or high degree vertices. 
To improve the running time in these cases, we developed a variation of the \emph{full} strategy: Our \emph{lazy} strategy does not re-rate any vertices immediately after contraction. 
Instead, all adjacent vertices $\mathrm{\Gamma}(u)$ are marked as \emph{invalid}. 
If the PQ returns an invalid vertex, we recalculate its rating and update the priority queue accordingly.

\subsection{Initial Partitioning.} \label{InitialPartitioning}
Coarsening is performed until the coarsest hypergraph is small enough to be partitioned by an initial partitioning algorithm.
We use a portfolio of several algorithms to compute an initial bipartition. Each algorithm is run twenty times. We select the partition
with the best cut and lowest imbalance to be projected back to the original hypergraph. In case all partitions are imbalanced, we choose the
partition with smallest imbalance. The portfolio-based approach increases diversification and produced the best results in \cite{HeuerInitialPartitioning}. In the following, we give a brief overview of
the algorithms used and refer to \cite{HeuerInitialPartitioning} for a more detailed description and evaluation. \emph{Random partitioning} randomly assigns
vertices to a block. \emph{Breadth-First-Search} (BFS) starts with a randomly chosen vertex and performs a BFS traversal
of the hypergraph until it has discovered half of the hypergraph. The vertices visited during the traversal constitute block $V_0$,
all remaining vertices constitute $V_1$. Furthermore, we use different variations of \emph{Greedy Hypergraph Growing} (GHG)~\cite{PaToH}.
Each version first computes two pseudo-peripheral vertices: Starting from a random vertex, we perform a BFS. The last vertex visited serves as the start vertex
for the next BFS. This vertex and the last vertex visited by the second BFS are supposed to be ''far'' away from each other. Therefore one is used as the seed vertex for block $V_0$, the other for block $V_1$. 
For each block, we maintain a PQ that stores the neighboring vertices of the growing cluster according to a score function.
We use the FM gain, which will be described in the next section, as well as the \emph{Max-Net} and \emph{Max-Pin} gain definitions
(which are also used in PaToH~\cite{PaToH}) as score functions.
Our GHG variants also differ in the way the clusters are grown.
\emph{Greedy-Global} always moves the vertex with the highest score in both PQs to the corresponding block.
\emph{Greedy-Sequential} first grows block $V_0$ and then block $V_1$, while \emph{Greedy-Round-Robin} grows both blocks simultaneously.
This again increases diversification. In total, we thus have nine different initial partitioning algorithms based on GHG. 
The last algorithm is based on \emph{size-constrained label propagation (SCLaP)} \cite{LPAgraphPartitioning}. Each vertex has a label representing its block. 
Initially all labels are empty. The algorithm starts by searching two pseudo peripheral vertices via BFS. One vertex along with $\tau$ of its neighbors gets label $V_0$, while the other 
vertex and $\tau$ of its neighbors get label $V_1$. 
We then perform label propagation until the algorithm has converged, i.e. no empty labels remain. Vertices with the same label then become the blocks of the partition.
The tuning parameter $\tau$ is used to prevent labels from completely disappearing in the course of the algorithm.
Based on the results in \cite{HeuerInitialPartitioning},  we set $\tau$ to five in our experiments.

\subsection{Localized FM Local Search.} \label{localizedFM}
Our local search algorithm is similar to the FM-algorithm~\cite{FM82} and is further inspired by the algorithm used in KaSPar~\cite{nGP} for graph partitioning.
A key difference to traditional FM is the way a local search pass is started: Instead of initializing the algorithm with all vertices or all border vertices, we perform a highly localized 
 search starting only with the representative and the just uncontracted vertex. The search then gradually expands around this vertex pair by successively considering neighboring vertices. 
Traditional multilevel FM implementations as well as KaSPar \emph{always} compute the gain of \emph{each} vertex \emph{from scratch} at \emph{each} level of the hierarchy.
During an FM pass, these values are then kept up-to-date by delta-gain-updates~\cite{PaToH}. Since our algorithm starts only around two vertices,
many gain values would never be used during a local search pass. We therefore maintain a \emph{gain cache} that ensures that the gain of a vertex move is calculated at most \emph{once} during \emph{all} local searches along the $n$-level hierarchy. 

\paragraph{Algorithm Outline.} 
We use two priority queues (PQs) to maintain the possible moves for all vertices -- one for each block. 
At the beginning of a local search pass, all queues are empty and disabled. A disabled PQ will not be considered when searching for the next move with the highest gain. 
All vertices are labeled inactive and unmarked. Only unmarked vertices are allowed to become active. 
To start the local search phase after each uncontraction, we activate the representative and the just uncontracted vertex if 
they are border vertices. Otherwise, no local search phase is started. 
\emph{Activating} a vertex $v$ currently assigned to block $b[v]$ means that we calculate the \emph{gain} $g_i(v)$ for moving $v$ to the other block $V_i \in R(v) \setminus \{b[v]\}$ and 
insert $v$ into the corresponding queue $P_i$ using $g_i(v)$ as key. The gain $g_i(v)$ is defined as: 
\begin{equation} \label{eq:gain}
\begin{split}
g_{i}(v) := &\sum \limits_{e \in I(v)} \{ \omega(e) : \mathrm{\Phi}(e, i) = |e| - 1\} \\ - &\sum \limits_{e \in I(v)} \{ \omega(e) : \mathrm{\Phi}(e, b[v]) = |e|\}.
\end{split}
\end{equation}
After insertion, PQs corresponding to \emph{underloaded} blocks become enabled. Since a move to an overloaded block will
never be feasible, a queue corresponding to an overloaded block is left disabled. The algorithm then 
repeatedly queries only the \emph{non-empty, enabled} queues to find the move with the highest gain $g_{i}(v)$, breaking ties arbitrarily.
Vertex $v$ is then moved to block $V_i$ and labeled inactive and marked. 
We then update all neighbors $\mathrm{\Gamma}(v)$ of $v$ as follows: All previously inactive neighbors are activated as described above. 
Neighbors that have become internal are labeled inactive and the corresponding moves are deleted from the PQs.
Finally, we perform \emph{delta-gain-updates} for all moves of the remaining  active border vertices in $\mathrm{\Gamma}(v)$:
If the move changed the gain contribution of a net $e \in I(v)$, we account for that change by
incrementing/decrementing the gains of the corresponding moves by $\omega(e)$ using the delta-gain-update algorithm of Papa and Markov~\cite{Papa2007}.

%
\begin{figure*}[t!] 
\centering
\includegraphics[width=.9\textwidth]{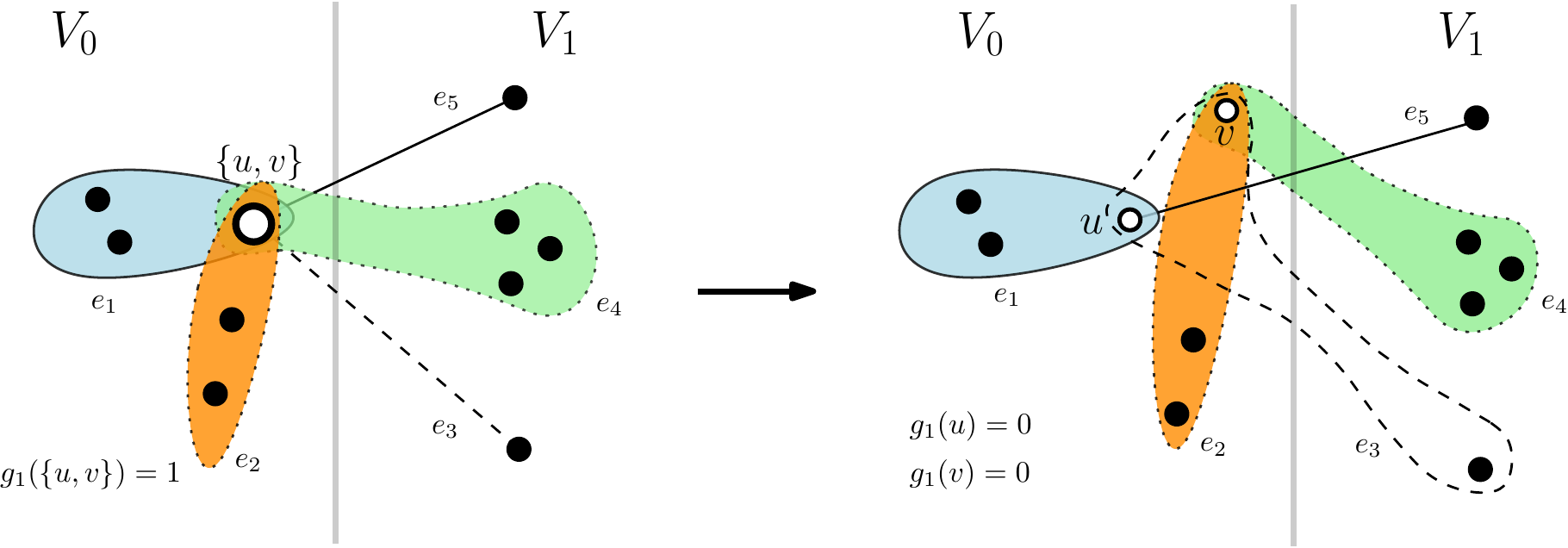}
\caption{Example of an uncontraction operation that affectes the cached gain value $c[u]$ of the representative $u$. Since nets $\{e_2,e_4\} \notin \mathrm{I}(u)$ 
after the uncontraction, they no longer contribute $-\omega(e_2)$ resp. $\omega(e_4)$ to the gain of $u$. Similarly, moving $u$ to $V_1$ now does not
remove $e_3$ from the cut anymore because of $v$. Its contribution to $c[u]$ therefore becomes zero.}\label{fig:gaincache}
\end{figure*}

To further decrease the running time, we exclude nets from gain update that can not be removed from the cut in the current local search pass. 
A net is \emph{locked} in the bipartition, once it has at least one disabled pin in each of the two blocks~\cite{LockedNets}.
In this case, it is not possible to remove such a net from the cut by moving any of the remaining movable pins to another block. 
Thus it is not necessary to perform any further delta-gain-updates for locked nets, since their contribution to the gain values of their pins does not change any more.
This observation was first described by Krishnamurthy~\cite{LockedNets}.
We integrate locking of nets into our algorithm by labeling each net during a local search pass. Initially, all nets are labeled \emph{free}.
Once the first pin of a net is moved, the net becomes \emph{loose}. It now has a pin in one block that cannot be moved again. Further moves to this 
block do not change the label of the net. As soon as another pin is moved to the other block, the net is labeled \emph{locked} and is
excluded from future delta-gain-updates. 

Once all neighbors are updated, local search continues until either no non-empty, enabled PQ remains or a constant number of $c$ moves neither decreased the cut nor improved the current imbalance. 
The latter criterion is necessary, because otherwise the $n$-level approach could lead to $|V|^2$ local search steps in total. 
After local search is stopped, we reverse all moves until we arrive at the lowest cut state reached during the search that fulfills the balance constraint. 
All vertices become unmarked and inactive and the algorithm is then repeated until no further improvement is achieved.

\paragraph{Caching of Gain Values.}
We briefly outline the details of the gain cache. Let $c[v]$ denote the cache entry for vertex $v$.
After initial partitioning, the gain cache is empty. If a vertex becomes activated during a local search pass, we check whether or not its gain is already cached. 
If it is cached, the cached value is used for activation. Otherwise, we calculate the gain according to Eq.~(\ref{eq:gain}), 
insert it into the cache and activate the vertex. After moving a vertex $v$ with gain $g_{i}(v)$ to block $V_i$, its cache value is set to $c[v] := - g_{i}(v)$. 
The delta-gain updates of its neighbors $\mathrm{\Gamma}(v)$ are then also applied to the corresponding cache entries. 
Thus the gain cache always resembles the current state of the hypergraph.
Since our algorithm performs a  rollback operation at the end of a local search pass that undoes vertex moves, we also have to undo delta-gain updates applied on the cache.
This can be done by additionally maintaining a \emph{rollback delta cache} that stores the negated delta-gain updates for each vertex. During rollback,
this delta cache is then used to restore the gain cache to a valid state.
Each time a local search is started with an uncontracted vertex pair $(u,v)$, we have to account for the fact that the uncontraction potentially affected~$c[u]$. 
A simple variant of the caching algorithm just invalidates the corresponding cache entry and re-calculates the gain. Since $v$ did not exist on previous levels of the hierarchy,
its gain must also be computed from scratch. 

We now describe a more sophisticated variant that is able to update $c[u]$ based on information gathered during the uncontraction and that further infers $c[v]$ from $c[u]$. 

After uncontraction, we initially set $c[v] := c[u]$. Both cache entries are then updated by examining each net $e \in \mathrm{I}(u)$. 
We have to distinguish three cases (see \autoref{fig:gaincache} for an example): 
\begin{enumerate}
\item After uncontraction, $u$ is not incident to net $e$ any more. If $e$ was a cut net that could have been removed from the cut by moving $u$ to the other block,
 $c[u]$ has to be decremented by $\omega(e)$ ($e_4$ in \autoref{fig:gaincache}). 
Similarly, if $e$ was an internal net,  moving $u$ would have made it a cut net. In this case, $c[u]$ is incremented by $\omega(e)$  
($e_2$ in \autoref{fig:gaincache}).
\item After uncontraction, $e$ contains both $u$ and $v$. If $\mathrm{\Phi}(e,b[v]) =2$, the net cannot be removed 
from the cut anymore by moving either $u$ or $v$. We therefore have to decrement both $c[u]$ and $c[v]$ by $\omega(e)$ ($e_3$ in \autoref{fig:gaincache}).
\item Finally, we have to account for nets to which $v$ is not incident ($e_1$ and $e_5$ in \autoref{fig:gaincache}). If such a net $e$ can be removed from the
cut by moving $u$, it contributes $\omega(e)$ to $c[u]$. We therefore have to decrement $c[v]$ by $\omega(e)$ to account for the fact that $e \notin I(v)$.
Similarly, if moving $u$ makes $e$ a cut net, we have to increment $c[v]$ accordingly.
\end{enumerate}

\begin{figure*}[t] 
  \centering
  \includegraphics[width=.9\textwidth]{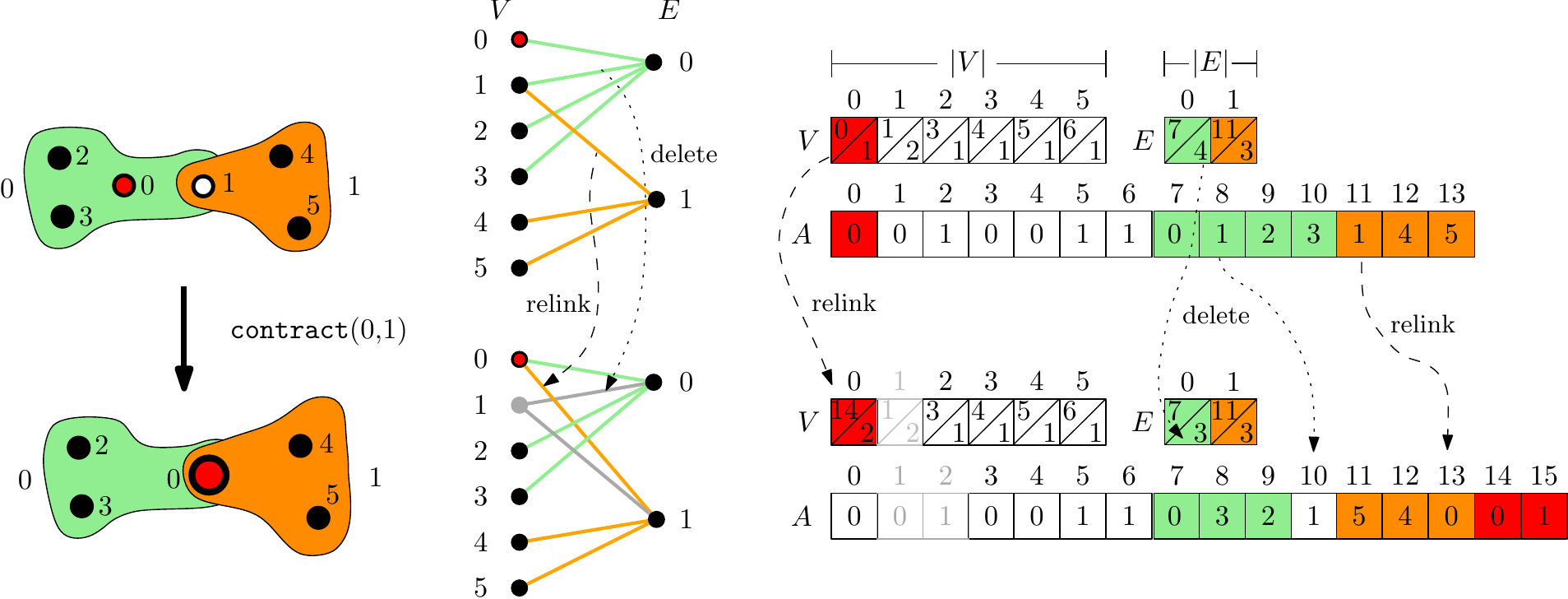}
  \caption{Example of a contraction operation. The hypergraph $H$ is depicted on the left, the corresponding bipartite graph representation is shown 
    in the middle and the adjacency data structure is shown on the right. The contraction leads to an edge deletion operation for net $0$ and
    a relink operation for net $1$.}\label{fig:graphds}
\end{figure*}

\subsection{Hypergraph Data Structure.} \label{hypergraphDS}
Conceptually, we represent the hypergraph $H$ as an undirected \emph{bipartite} graph $G=(V \dot \cup E, F)$. The vertices and nets
of $H$ form the vertex set. For each net $e$ incident to a vertex $v$, we add an edge $(e,v)$ to the graph.
The edge set $F$ is thus defined as $F := \{(e,v)~|~e \in E \wedge v \in e \}$.
In the following, we use \emph{nodes} and \emph{edges} when referring to $G$ and \emph{vertices} 
and \emph{nets} when referring to $H$. When contracting a vertex pair $(u,v)$, we mark the corresponding node $v$ as deleted. The edges $(v,e)$ incident to $v$ are treated as follows:
If $G$ already contains an edge $(u,e)$, then net $e$ contained both $u$ and $v$ before the contraction. 
In this case, we simply delete the edge $(v,e)$ from $G$. Otherwise, net $e$ only contained $v$. We therefore have to
relink the edge $(v,e)$ to $u$.
A modified adjacency array is used to represent $G$. It is divided into two offset arrays $V$, $E$ and an incidence array $A$ that stores 
the edges leaving $v$ for each node $v \in V \dot \cup E$. $V$ stores the starting positions of the entries in $A$ ($V[\cdot].f$) and $d(v)$ ($V[\cdot].s$). $E$ stores
the same for nodes representing the nets of $H$. Nets incident to a vertex $v$ are accessible as $A[V[v].f], ...,  A[V[v].f+V[v].s-1]$. The
pins of a net $e$ are accessed similarly using offset array $E$. An example is shown in \autoref{fig:graphds}.
Although the concept of relinking and removing edges remains the same for GP~\cite{nGP} and HGP, the actual implementation is different: 
While an adjacency array used as a graph data structure allows to access all neighbors $\Gamma(v)$ of node $v$, we can only access the nets $\mathrm{I}(v)$ of a vertex $v$ in HGP.
Since \cite{nGP} does not provide any implementation details, we give a brief description on how contraction and uncontraction operations are implemented in our hypergraph data structure.

\begin{figure*}[t!]
\begin{minipage}[b!]{.48\textwidth}
  \vspace{0pt}
\begingroup
\removelatexerror  
\begin{algorithm2e}[H]
\caption{Contract}\label{alg:contraction}\normalsize
\KwIn{Vertex pair to be contracted $(u,v)$}
$\mathcal{M} := \{ u, v,  V[u].f, V[u].s\}$ \\		
$c(u) := c(u) + c(v)$\\
\emph{copy} $:=$ \textbf{True}

\ForEach() {$e \in I(v) $}{
  $\tau, l := E[e].f + E[e].s - 1$\\
  \For() {$i:=E[e].f; i \leq l; \Inc i$} {
      \If() {$A[i] = v$} { 
        $\FuncSty{swap}(A[i],A[l])$, $\Dec i$
 		} \lElseIf() {$A[i] = u$} {
 			 $\tau :=i$
 		}
 	}

 	\If(\emph{ // relink operation}) {$\tau = l$} {
          $A[l] := u$\\	
          \If() {copy} {
            $A.\FuncSty{append}(\mathrm{I}(u))$\\
            $V[u].f := |A| - V[u].s$ \\
            \emph{copy} $:=$ \textbf{False}\\
          }
          $A.\FuncSty{append}(e)$
          
          $\Inc V[u].s$
 	} \Else(\emph{ // delete operation}) {
 	$\Dec E[e].s$
 	}

}
$V[v]$.\FuncSty{disable()}\\
\KwOut{Contraction memento $\mathcal{M}$}
\end{algorithm2e}
\endgroup
\end{minipage}%
\begin{minipage}[b!]{.48\textwidth}
  \vspace{0pt}
\begingroup
\removelatexerror
\begin{algorithm2e}[H]
\caption{Uncontract}\label{alg:uncontraction}\normalsize
\KwIn{Contraction Memento $\mathcal{M}$}
$V[\mathcal{M}.v]$.\FuncSty{enable()} \\
\emph{// bitset}\\
$b = [b_0,\dots, b_{m-1}] := [false, \dots, false]$  \\
\emph{// assume no net contained $u$}\\
\ForEach() {$e \in I(\mathcal{M}.v)$} {
$b[e] :=$ \textbf{True}
}

\emph{// these nets actually contained $u$}\\
\For() {$i:=\mathcal{M}.f; i < \mathcal{M}.f + \mathcal{M}.s; \Inc i$} {
$b[A[i]] :=$ \textbf{False}
}

\If() {$V[\mathcal{M}.u].s - \mathcal{M}.s > 0$} {
   \emph{// reverse relink operations}\\
  \ForEach() {$e \in I(\mathcal{M}.u)$} {
    \If() {$b[e]$} {
      \ForEach() {$ p \in e$} {
        \lIf() {$p = \mathcal{M}.u$} {$ p:= \mathcal{M}.v$; break
        }
      }
    }
  }
}

$V[\mathcal{M}.u].f := \mathcal{M}.f$\\ 
$V[\mathcal{M}.u].s := \mathcal{M}.s$ \\
$c(\mathcal{M}.u) := c(\mathcal{M}.u) - c(\mathcal{M}.v)$ \\

\ForEach(  \emph{// reverse deletes}) {$e \in I(\mathcal{M}.v)$} {
	\lIf() {$b[e] =$ \textbf{False}} {
	$\Inc E[e].size$
	}
}
\vspace{7.1pt}
\end{algorithm2e}
\endgroup
\end{minipage}%
\end{figure*}
\paragraph{Contraction.}
Contracting a vertex pair $(u,v) \in H$ works as follows: For each net $e \in \mathrm{I}(v)$ we have to determine if the corresponding 
edge $(v,e) \in G$ can simply be deleted or if a relink operation is necessary. This can be done with one iteration over the pins of $e$. During
this iteration, we swap $v$ with the last pin of $e$ located at position $A[E[e].f +E[e].s-1]$ and additionally search for vertex $u$. If we found $u$, then there is no
need to perform a relink operation and we can remove $v$ from $e$ by simply decrementing $E[e].s$. 
If $u$ was not found, we have to relink $e$ to $u$. A relink operation adds the undirected edge $(u,e)$ to $G$. 
To achieve this in our data structure, we have to add $e$ to the subarray of $u$ and vice versa. The latter
can be accomplished by reusing the pin slot of $v$: After the iteration over the pins of $e$, $v$ is the last entry in the subarray of $e$.
Setting $A[E[e].f + E[e].s -1 ] := u$ therefore adds $u$ to the pins of $e$ and simultaneously removes $v$. 
The former is more difficult, because we actually have to extend the subarray of $u$.
This is done by copying the subarray of $u$ to the end of $A$, appending $e$ and updating $V[u].f$ and $V[u].s$ accordingly.\footnote{This could
be improved by leaving some free space in $A$ for each node. The copying operations, however, had no noticeable effect on the running time of our algorithm.}
To complete the contraction operation, we disable $v$ to remove it from the hypergraph.
Algorithm~\ref{alg:contraction} gives the corresponding pseudocode. 
Note that the subarray of $u$ is copied to the end of $A$ at most once during a contraction as part of the first relink operation.

\paragraph{Uncontraction.} To reverse a contraction,  we store a memento $\mathcal{M}$ for each contraction consisting of the contracted vertex pair $(u,v)$
and the values $V[u].f$ and $V[u].s$ before the contraction. After re-enabling $v$, deletions can be reversed by simply increasing $E[e].size$ for the corresponding nets.
To reverse a relink operation of an edge $(u,e)$, we first reset the pin slot containing $u$ to $v$. 
Setting $V[u].f := \mathcal{M}.f$ and $V[u].s := \mathcal{M}.s$ then restores $\mathrm{I}(u)$ to the state before the contraction.
Deletion operations have to be reversed for all nets $\bar{D}:=\mathrm{I(v)} \cap \{A[\mathcal{M}.f], ..., A[\mathcal{M}.f + \mathrm{M}.s-1]\}$, because these nets contained both $u$ and $v$. 
All remaining nets $\bar{R}:= \mathrm{I}(v) \setminus \bar{D}$ require the reversal of a relink operation. 
A pseudocode description of the uncontraction operation can be found in Algorithm~\ref{alg:uncontraction}.
After initial partitioning, we initialize $\mathrm{\Phi}(e,V_i)$ for each cut net $e$. For constant-time border vertex checks, we additionally store the number of incident cut nets for each vertex.
Both data structures are then maintained and updated during local search.

\section{Experiments} \label{Experiments}
\paragraph{System.}
All experiments are performed on a single core of a machine consisting of two Intel Xeon E5-2670 Octa-Core processors (Sandy Bridge) 
clocked at $2.6$ GHz. The machine has $64$~GB main memory, $20$ MB L3-Cache and 8x256 KB L2-Cache and is running Ret Hat Enterprise Linux (RHEL) 6.4.

\paragraph{Algorithm Configuration and Methodology.}  \label{Methodology}
The algorithm is implemented in the $n$-level hypergraph partitioning framework \emph{KaHyPar} (\textbf{Ka}rlsruhe \textbf{Hy}pergraph \textbf{Par}titioning). The code is written in C++ and compiled using gcc-4.9.1 with flags \texttt{-O3} \texttt{-mtune=native} \texttt{-march=native}.
We performed a large number of experiments to tune the parameters of our algorithms on medium-sized VLSI and sparse matrix instances. 
The properties of these hypergraphs are summarized in Appendix~\ref{app:tuninginstances}. 
A full description of these experiments is omitted due to space constraints. The following decisions are made based on the results summarized in \autoref{tbl:sttuning} and \autoref{tbl:ctuning} in Appendix~\ref{app:parametertuning}: 
The coarsening process is stopped as soon as the number of vertices drops below $t=320$ or no eligible vertex is left. 
The scaling factor $s$ for the highest allowed vertex weight during coarsening is set to $3.25$. Each local search pass is stopped as soon as $c=350$ moves did not yield any improvement.

We perform ten repetitions with different seeds for each test instance and report the \emph{arithmetic 
  mean} of the computed cut and running time as well as the best cut found. When averaging over different instances, we use the 
\emph{geometric mean} in order to give every instance a comparable influence on the final result. In order to include instances with
a cut of zero into the results, we set the corresponding cut values to \emph{one} for geometric mean and ratio computations.

\paragraph{Instances.} \label{Instances}
We evaluate our algorithm on hypergraphs derived from three 
benchmark sets: The ISPD98 VLSI Circuit Benchmark Suite~\cite{ISPD98},
the University of Florida Sparse Matrix Collection~\cite{FloridaSPM} and the international SAT Competition 2014~\cite{SAT14Competition}.
From the latter, we randomly selected~$100$ instances from the application track and converted them into hypergraphs as follows:
Each boolean variable (and its complement) is mapped to one vertex and each clause constitutes a net~\cite{Papa2007}. 
The Sparse Matrix Collection is organized into $172$ groups and each group contains matrices of different application areas.
From each group, we choose one matrix for each application area that has between $\numprint{10000}$ and $\numprint{10000000}$ columns.
In case multiple matrices fulfill our criteria, we randomly select one. In total, we include $192$ matrices, 
which are translated into hypergraphs using the row-net model~\cite{PaToH}, i.e. each row is treated as a net and each column as a vertex.
Empty rows are discarded. Both vertices and nets have unit weight. 
Together with the $18$ VLSI instances derived from the ISPD98 Circuit Benchmark Suite, a total of $310$ hypergraphs constitute our benchmark set.
Each of these hypergraphs is partitioned into $k \in \{2,4,8,16,32,64,128\}$ blocks with $\varepsilon = 0.03$. 
For each value of $k$, a $k$-way partition is considered to be \emph{one} test instance, resulting in a total of $2170$ instances.
We exclude $173$ test instances, because either PaToH-Q could not allocate enough memory or other partitioners did not finish in time. 
The excluded instances are shown in Appendix~\ref{app:excludedinstnaces}. Note that only PaToH-D was able to partition all instances within the given time limit. 
The comparison in \autoref{sec:full_set_epsilon003} is therefore based on the remaining $1997$ test instances.
\begin{figure}[t!]
\centering
\begin{knitrout}
\definecolor{shadecolor}{rgb}{0.969, 0.969, 0.969}\color{fgcolor}
{\centering \includegraphics[width=.45\textwidth]{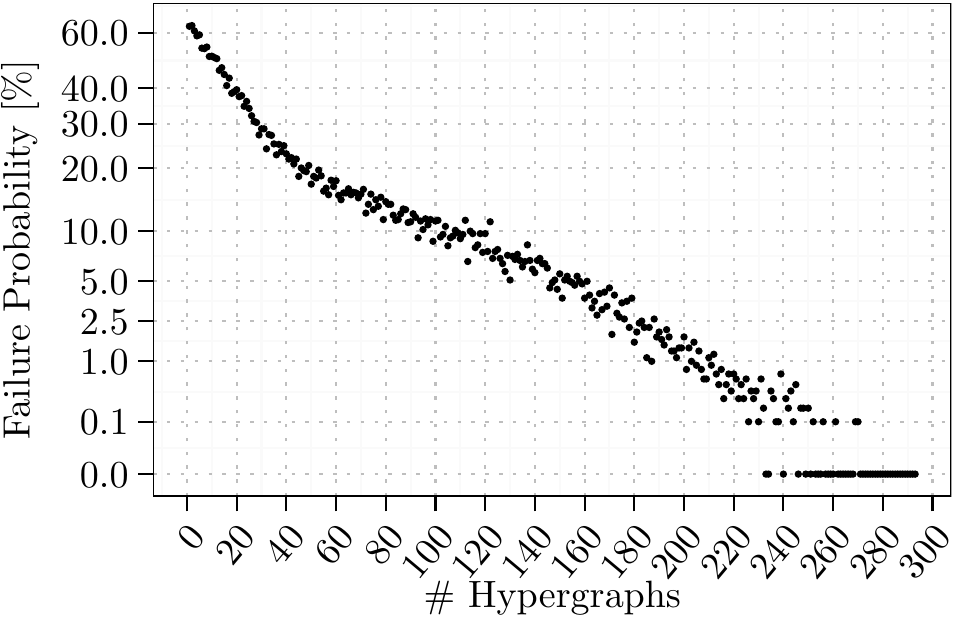} 
}
\end{knitrout}
\caption{Probability that a random sample of a certain number of hypergraphs does not provide the same results as the
full benchmark set. Each data point is based on \numprint{1000} random samples.}\label{fig:sampling}
\end{figure}

In order to evaluate the performance of our algorithm for different values of $\varepsilon$, we perform additional experiments
for a subset of the 293 hypergraphs used in \autoref{sec:full_set_epsilon003}, the same values of $k$, and $\varepsilon \in \{0.01, 0.1\}$. To estimate the 
number of hypergraphs necessary to produce the same qualitative results as presented in \autoref{sec:full_set_epsilon003}, we performed the following experiment:
For each subset size possible, we took a random sample of that size and verified whether or not the results for the sample match the results for the full benchmark set
by comparing the sorted min-cut ratios of KaHyPar to the ratios of all other partitioners. Only in case all ratios of KaHyPar were better than or equal to
the ratios of all other partitioners, we counted the sample as a successful trial. Note that this is a rather strong requirement for the selected subset. 
\autoref{fig:sampling} summarizes these experiments. Each data point corresponds to 1000 random samples. Based on these results, we use a subset of 100 hypergraphs
to be able to reproduce the results with a probability of 90\%. The hypergraphs are chosen as follows: 
We use the 10 largest VLSI hypergraphs (ibm09 - ibm18), 30 randomly chosen SAT hypergraphs and 60 randomly chosen sparse
matrix hypergraphs. This subset is used in the experiments presented in \autoref{sec:different_imbalance}.

We compare our algorithms to both the $k$-way (hMetis-K) and the recursive-bisection variant (hMetis-R) of hMetis 2.0 (p1)~\cite{hMetisRB,hMetisKway} and to PaToH 3.2~\cite{PaToH}. We choose these two tools because of the following reasons: 
PaToH produces better quality than Zoltan's native parallel hypergraph partitioner (PHG) in serial mode~\cite{Zoltan,ZoltanUserGuide}.
Parkway does not run in serial mode and was found to be comparable to Zoltan PHG in serial mode~\cite{Zoltan}. 
Furthermore, PaToH has been shown to compute better solutions than Mondriaan~\cite{PaToHBetterThanMondriaansOwnPartitioner,PaToHBetterThanMondriaan} and MLPart~\cite{PaToHvsMLPart}.
Additionally, MLPart is restricted to bisections~\cite{MLPartBipartitioning,MLPartNoMultiway}. 
UMPa does not improve on PaToH when optimizing single objective functions that do not benefit from the directed hypergraph model~\cite{UMPa}.

hMetis-R defines the maximum allowed imbalance of a partition differently \cite{hMetisRB}: An imbalance value of 5, for example,
allows each block to weigh between $0.45 \cdot c(V)$ and $0.55 \cdot c(V)$ \emph{at each bisection step}.
We therefore translate our imbalance parameter $\varepsilon=0.03$ to $\varepsilon'$ as described in Eq. (\ref{eq:RBimbalance}) such that it matches our balance constraint after $\log_2(k)$ bisections: 
\begin{equation} \label{eq:RBimbalance}
  \varepsilon' := 100 \cdot \left(\left( (1+\varepsilon)~\frac{\lceil \frac{c(V)}{k} \rceil}{c(V)}\right)^{\frac{1}{\log_2(k)}} - 0.5 \right)
\end{equation}
PaToH is configured to use a final imbalance ratio of $\varepsilon \in \{0.01, 0.03, 0.1\}$ to match our balance constraint for the corresponding experiment.
Since it ignores the random seed if configured to use the quality preset, we report both the result of the quality preset (PaToH-Q) and the average over ten repetitions using the default configuration (PaToH-D). 
All partitioners have a time limit of $\numprint{15000} s$ per test instance.
The complete benchmark set, detailed statistics for each hypergraph and per-instance partitioning results for all experiments reported in this paper are publicly available~\cite{schlag_2015_29685}.

\paragraph{Effects of Engineering Efforts.}
Engineering the algorithms of the coarsening and local search phase is critical to the overall performance of our $n$-level hypergraph partitioner. 
We partitioned the hypergraph derived from the sparse matrix \texttt{wb-edu} into two blocks as an example. Using the \emph{lazy} re-rating strategy
for coarsening reduces the running time of the coarsening phase by \emph{two orders of magnitude}: While coarsening took $38349.6s$ using the full re-rating strategy,
the lazy re-rating strategy only took $450.3s$. In KaSPar~\cite{nGP} it was sufficient to calculate the gain values from scratch on each local search pass.
However, this approach leads to poor performance in $n$-level hypergraph partitioning, although our implementation already employs several known speedup techniques (delta-gain-updates, locked nets).
Without gain caching, local search took $2353.5s$. Activating the caching mechanism reduced the running time by a factor of \emph{four} down to $579.9s$.

\subsection{Comparison on full Benchmark Set.}\label{sec:full_set_epsilon003}
In $3700$ out of $19970$ experiments hMetis-K produced imbalanced partitions (up to $14\%$ imbalance). KaHyPar produced ten imbalanced partitions (up to $4\%$ imbalance) and one partition of PaToH-Q had $9\%$ imbalance. All instances partitioned by hMetis-R and PaToH-D fulfilled the balanced constraint
of $3\%$. In the following comparisons hMetis-K therefore has slight advantages because we do not disqualify imbalanced partitions.
\begin{figure*}
\centering
\begin{knitrout}
\definecolor{shadecolor}{rgb}{0.969, 0.969, 0.969}\color{fgcolor}

{\centering \includegraphics[width=.45\textwidth]{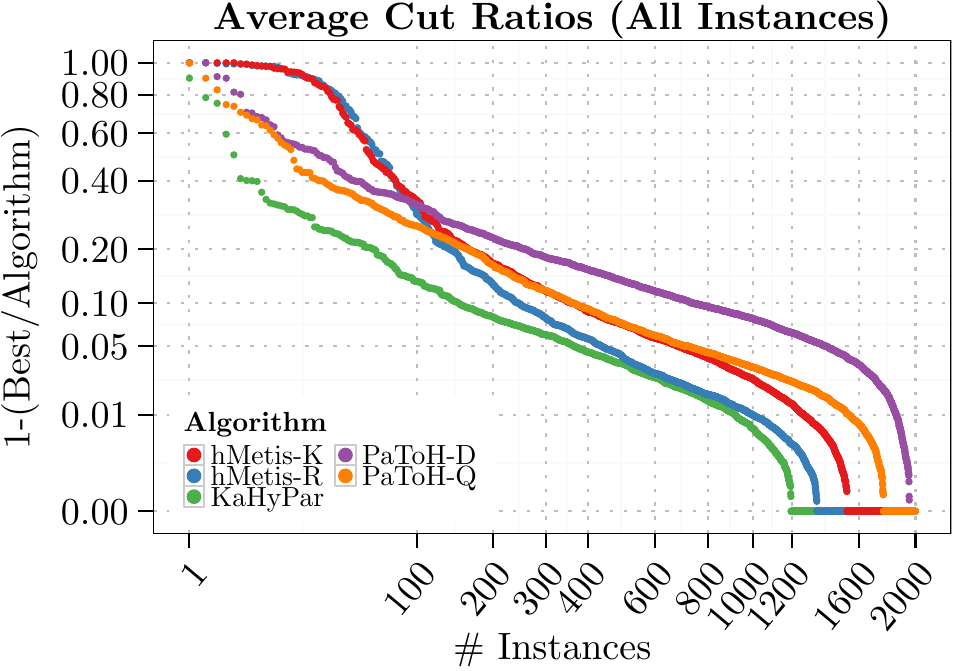} 
\includegraphics[width=.45\textwidth]{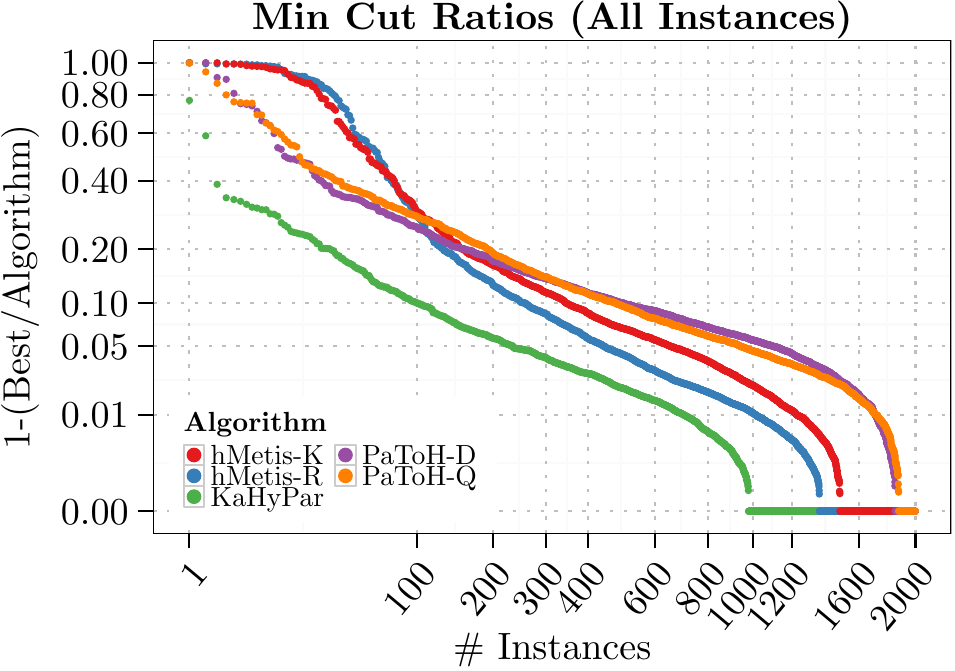} 
\includegraphics[width=.45\textwidth]{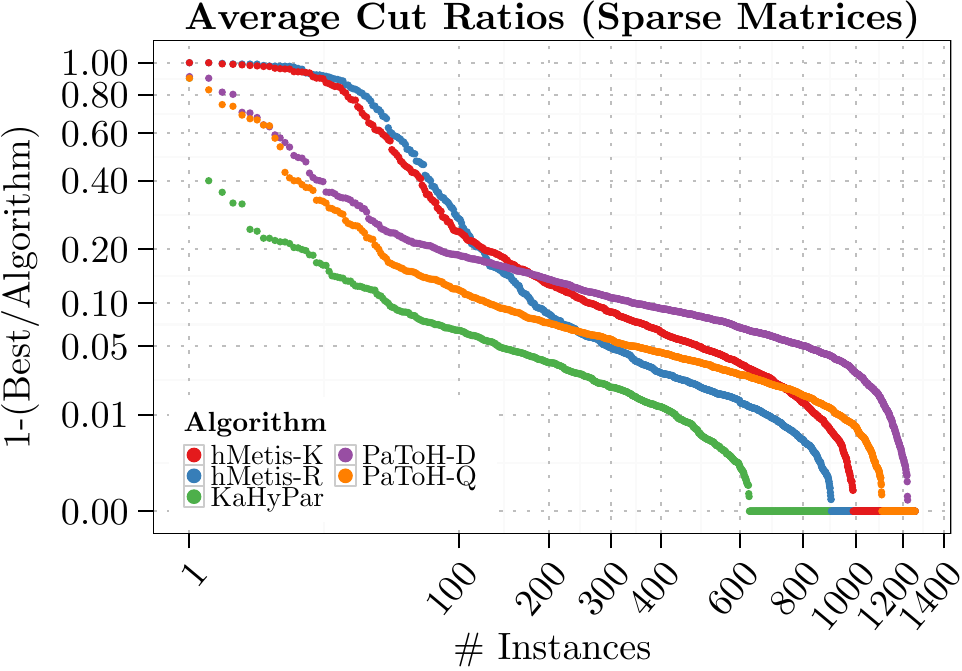} 
\includegraphics[width=.45\textwidth]{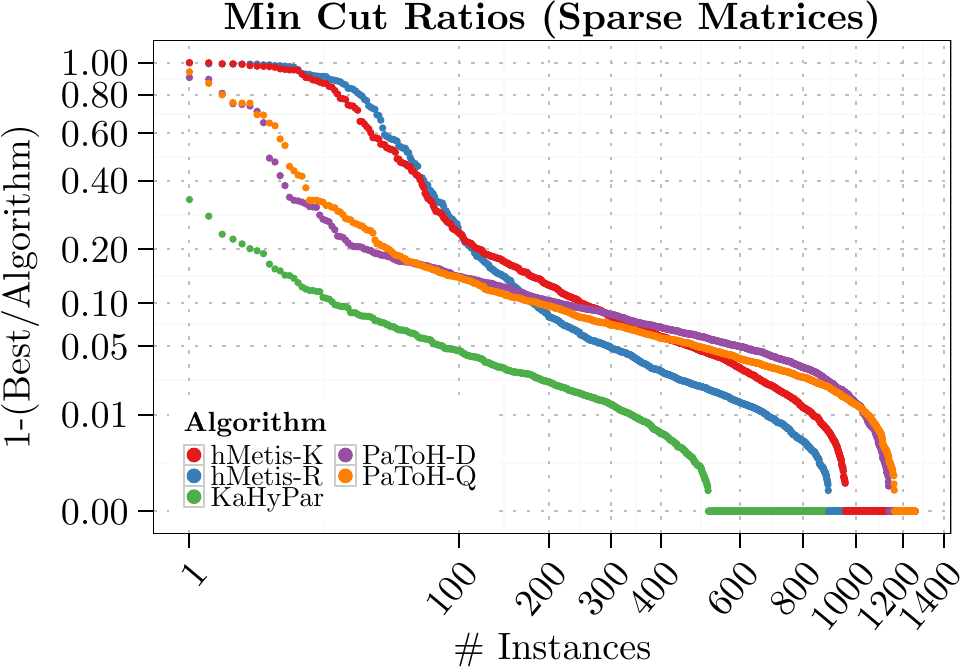} 
\includegraphics[width=.45\textwidth]{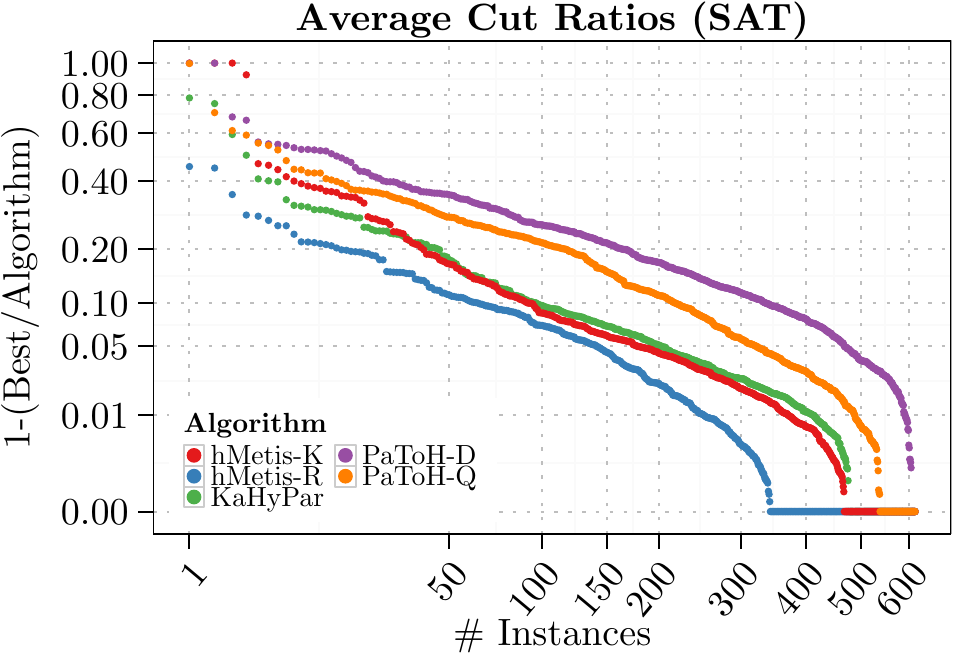} 
\includegraphics[width=.45\textwidth]{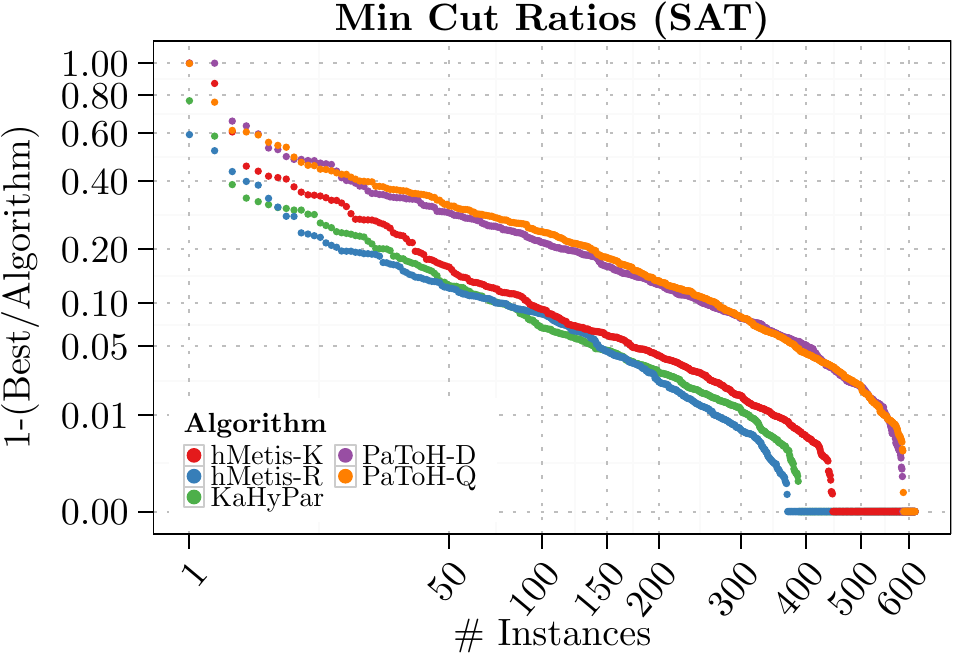} 
\includegraphics[width=.45\textwidth]{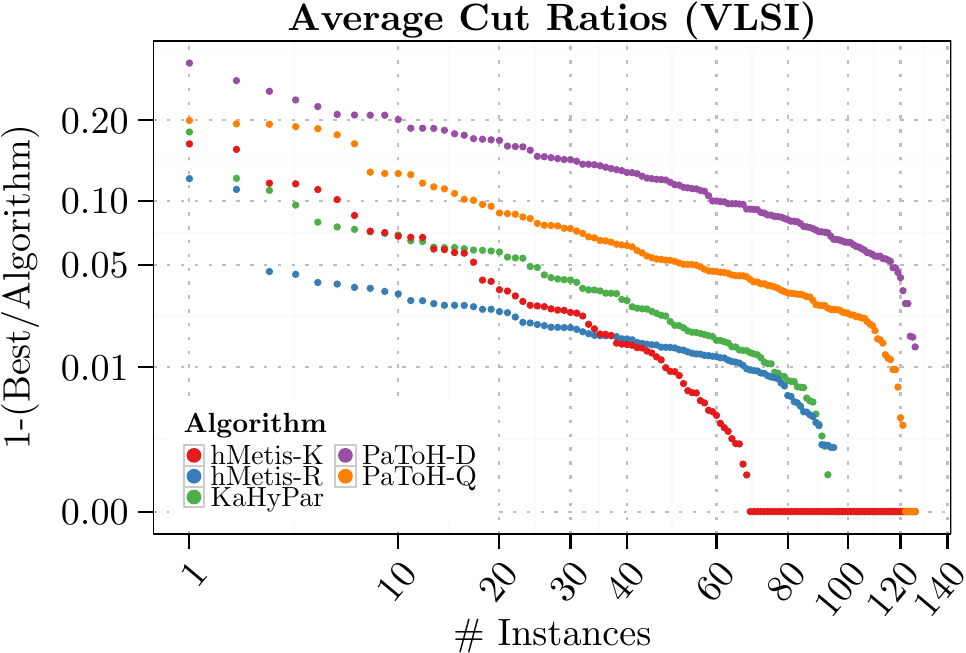} 
\includegraphics[width=.45\textwidth]{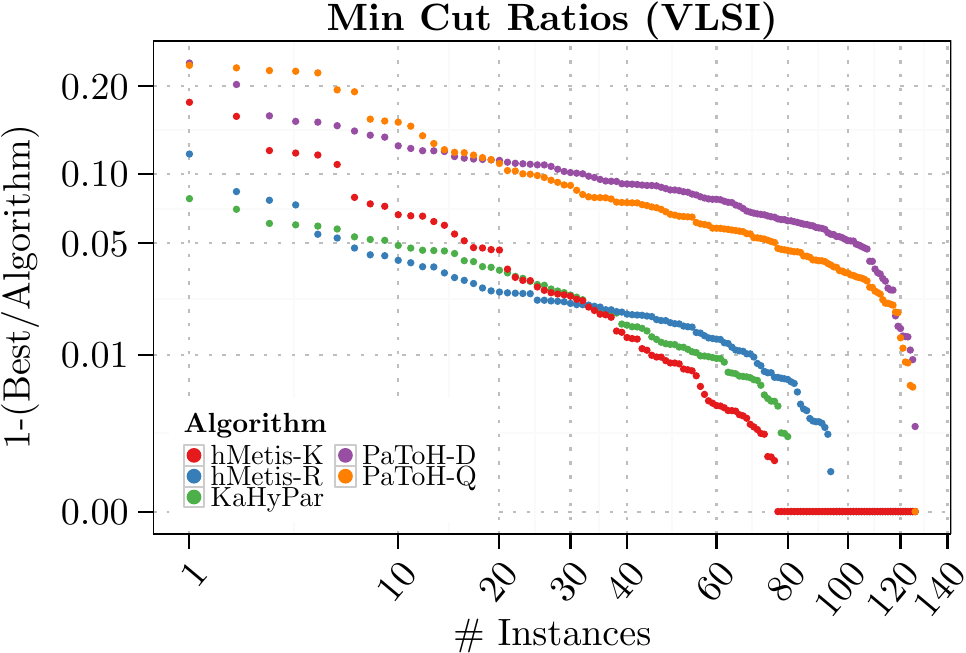} 

}
\end{knitrout}
\caption{Performance plots for $\varepsilon = 0.03$. The y-axis shows the ratio between the smallest cut of all algorithms and the cut produced
by the corresponding algorithm. Since we plot 1 - (best/algorithm), a value of zero indicates that the corresponding algorithm
produced the best solution. Note the cube root scale for both axes.}\label{fig:full_eval}
\end{figure*}

\autoref{fig:full_eval} summarizes the results of our experiments for $\varepsilon=0.03$. For each algorithm, it
 relates the smallest average (left column) and minimum (right column) cut of all algorithms to the corresponding cut
 produced the algorithm on a per-instance basis. For each algorithm, these ratios are sorted in increasing order. 
Note that these plots use a cube root scale for both axes to reduce right skewness~\cite{st0223}. Since the plot shows
$1-(\text{best}/\text{algorithm})$, a value of zero indicates that the corresponding algorithm produced the best solution.
A point close to one indicates that the partition produced by the corresponding algorithm was considerably worse than the
partition produced by the best algorithm. Thus an algorithm is considered to perform better than another algorithm, if its corresponding
ratio values are below those of the other algorithm.

Looking at the solution quality across all instances (top), KaHyPar outperforms all other systems
regarding both average and best solution quality. KaHyPar produced the best partitions for $1018$ of the $1997$ instances. It is followed by hMetis-R ($643$) and hMetis-K ($519$). 
PaToH-D computed the best partitions for $147$ and PaToH-Q for $123$ instances. Note that for some instances multiple partitioners computed the same solution. 
Comparing the best solutions of KaHyPar to each partitioner individually, KaHyPar produced better partitions than PaToH-Q, PaToH-D, hMetis-K, hMetis-R in $1742$, $1701$, $1247$ and $1146$ cases, respectively.
Note that hMetis-R outperforms hMetis-K, although we had to tighten the balance constraint for hMetis-R and allowed imbalanced solutions for hMetis-K.
Except for a small number of instances where of both hMetis variants compute partitions that are significantly inferior to the best solution, the hMetis variants perform better than
both PaToH variants. 
The running times of each partitioner are compared in \autoref{fig:running_time_overview}. The plots show the ratio between the running time of each algorithm 
and the running time of the fastest algorithm on a per-instance basis. Ratios are again sorted in increasing order. Note that these plots use a log-scaled y-axis.
PaToH is the fastest partitioner. Reasons for this are not only the smaller number of levels but also aggressive optimizations like ignoring large hyperedges during coarsening 
and a memory management that makes PaToH-Q fail for more than 3 \% of the original instances. 
Taking \emph{both} running time and quality into account KaHyPar dominates hMetis-R and hMetis-K.

\begin{figure*}
\centering
\begin{knitrout}
\definecolor{shadecolor}{rgb}{0.969, 0.969, 0.969}\color{fgcolor}

{\centering \includegraphics[width=.45\textwidth]{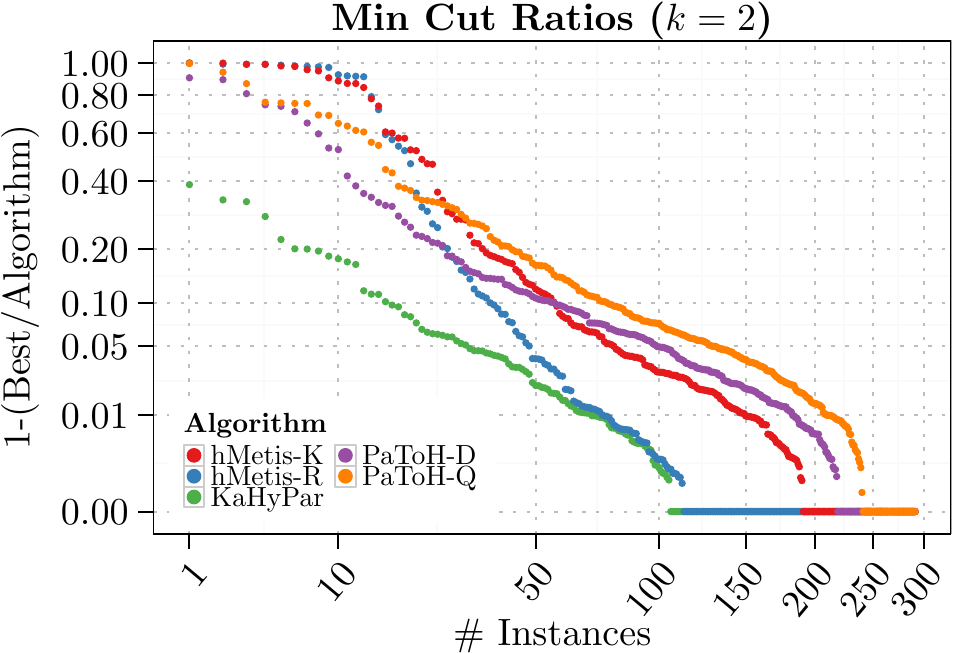} 
\includegraphics[width=.45\textwidth]{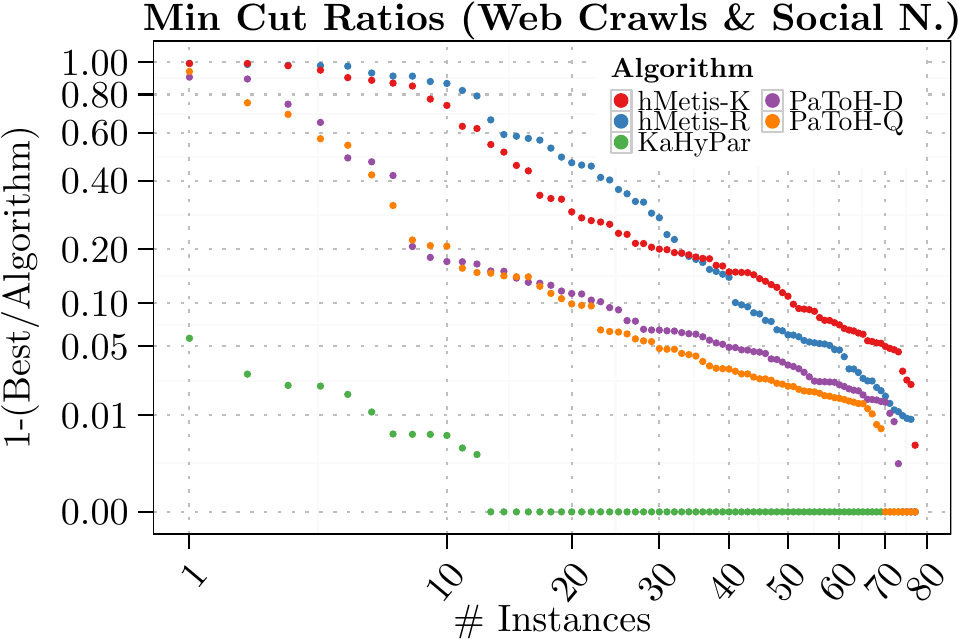} 

}

\end{knitrout}
\caption{Performance plots for subgroups of instances with large quality improvements ($\varepsilon = 0.03$): bipartitioning (left), web crawls and social networks (right). Note the cube root scale for both axes.}\label{fig:subgroup_eval}
\end{figure*}

\begin{figure*}[Ht!]
\centering
\begin{knitrout}
\definecolor{shadecolor}{rgb}{0.969, 0.969, 0.969}\color{fgcolor}

{\centering \includegraphics[width=.45\textwidth]{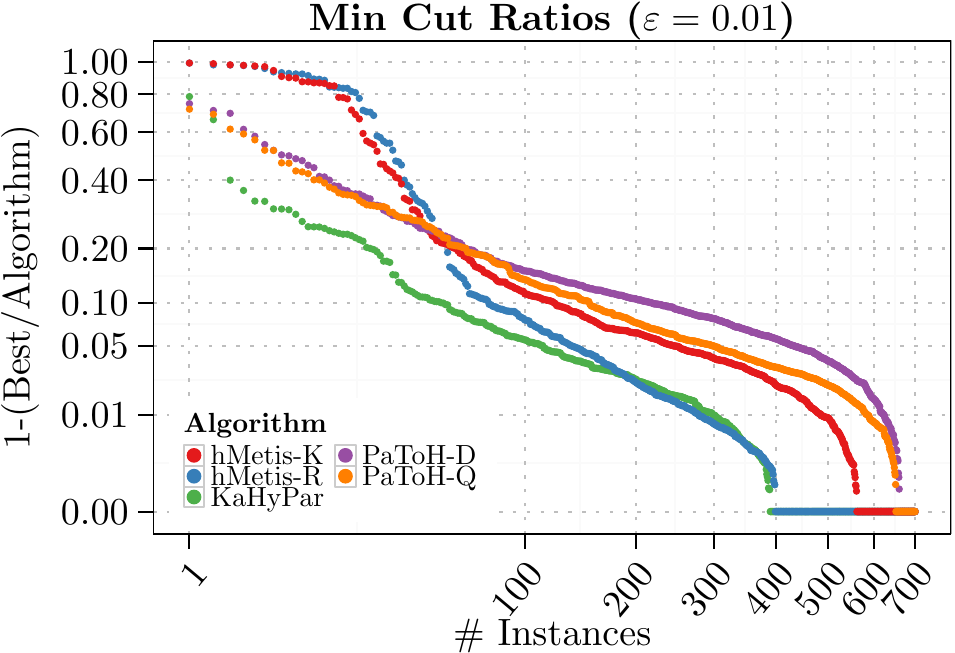} 
\includegraphics[width=.45\textwidth]{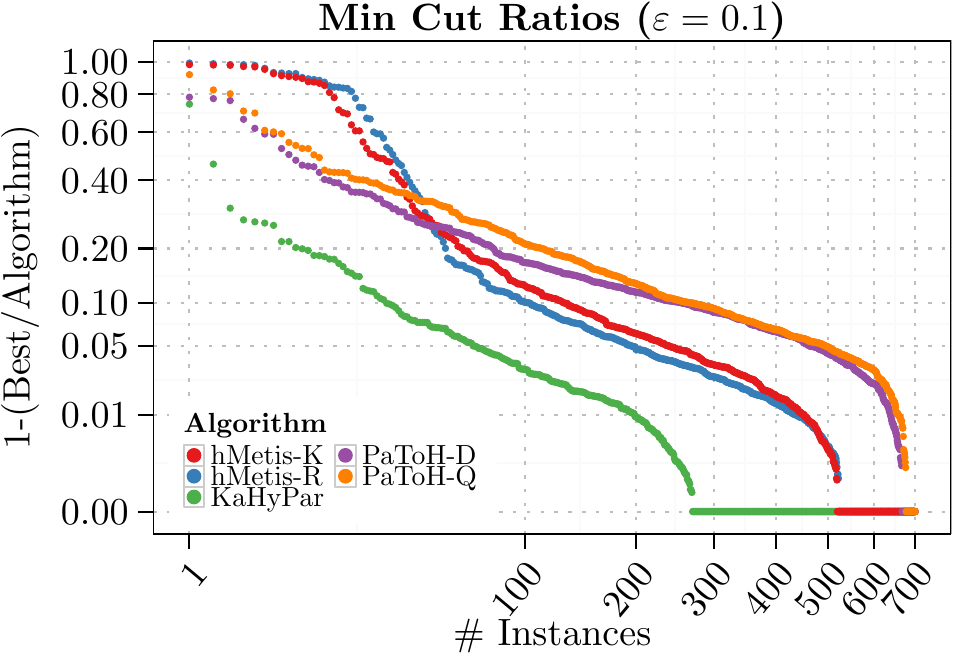} 

}

\end{knitrout}
\caption{Performance plots for $\varepsilon \in \{0.01,0.1\}$. Note the cube root scale for both axes.}\label{fig:epsilon_overview}
\end{figure*}

For certain well defined subgroups of instances significantly larger quality improvements are observed. \autoref{fig:subgroup_eval} show that this is the case for plain bipartitioning 
and for matrices derived from web crawls and social networks\footnote{Based on the following matrices: \texttt{webbase-1M}, \texttt{ca-CondMat}, \texttt{soc-sign-epinions}, \texttt{wb-edu}, 
\texttt{PGPgiantcompo}, \texttt{NotreDame\_www}, \texttt{NotreDame\_actors}, \texttt{IMDB}, \texttt{p2p-Gnutella25}, \texttt{Stanford}, \texttt{cnr-2000}}. 
On the other hand, for SAT instances, KaHyPar is slightly worse than hMetis-R. However, it is almost twice as fast in this case. On sparse matrix instances both hMetis variants
and both PaToH variants are inferior to KaHyPar regarding partitioning quality, while the running time of KaHyPar is comparable to hMetis. 
On VLSI hypergraphs, the performance in terms of running time and solution quality is comparable to hMetis-R.
Tables summarizing the average and best cuts found as well as the running times can be found in Appendix~\ref{app:geom}. Note that these tables also support
the interpretation of the results presented in this section.

\begin{figure}
\centering
\begin{knitrout}
\definecolor{shadecolor}{rgb}{0.969, 0.969, 0.969}\color{fgcolor}

{\centering \includegraphics[width=.42\textwidth]{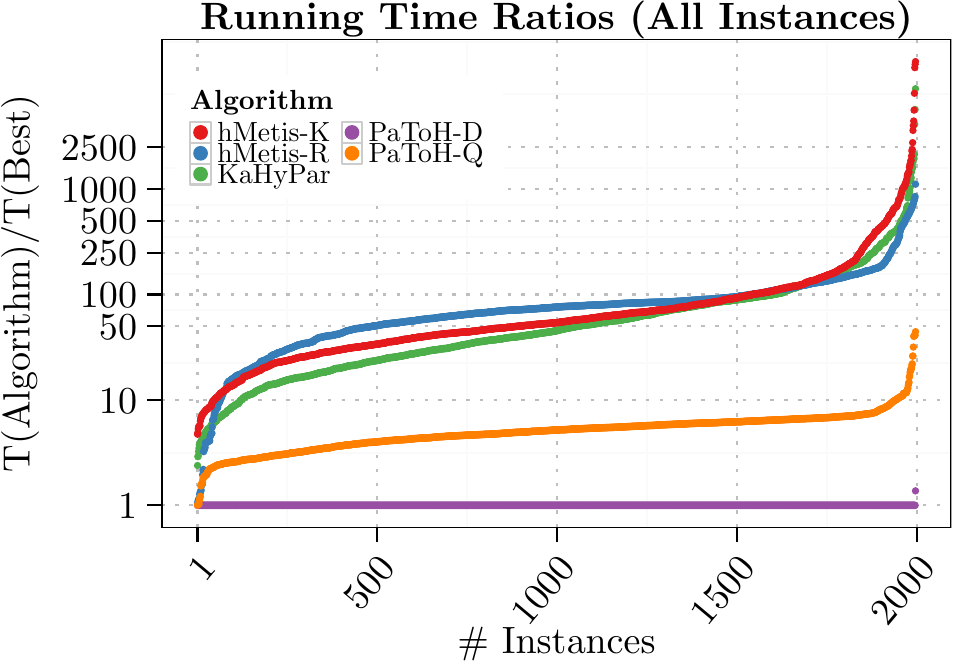} 
\includegraphics[width=.42\textwidth]{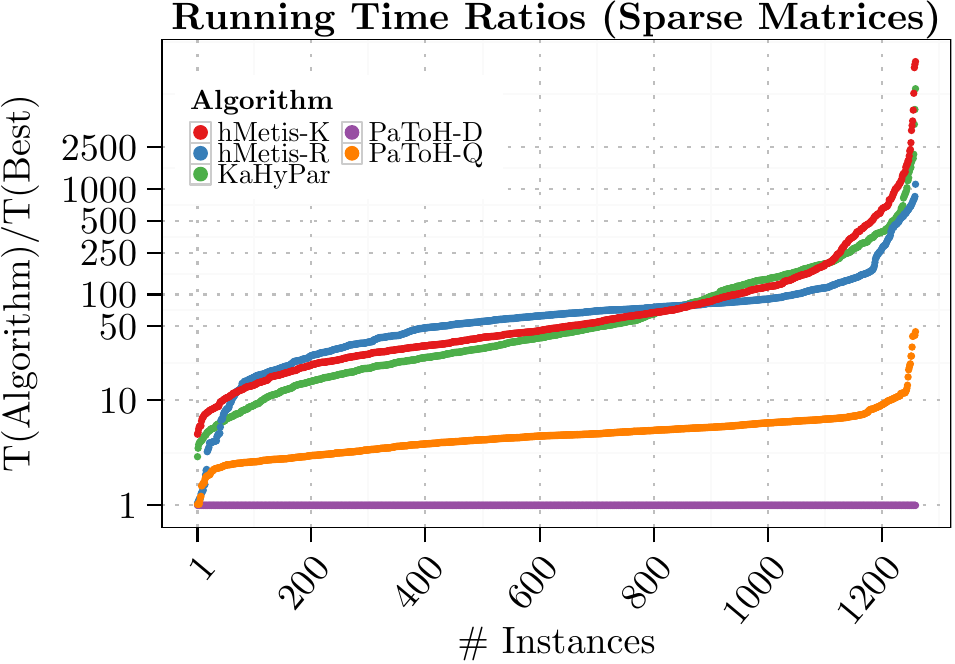} 
\includegraphics[width=.42\textwidth]{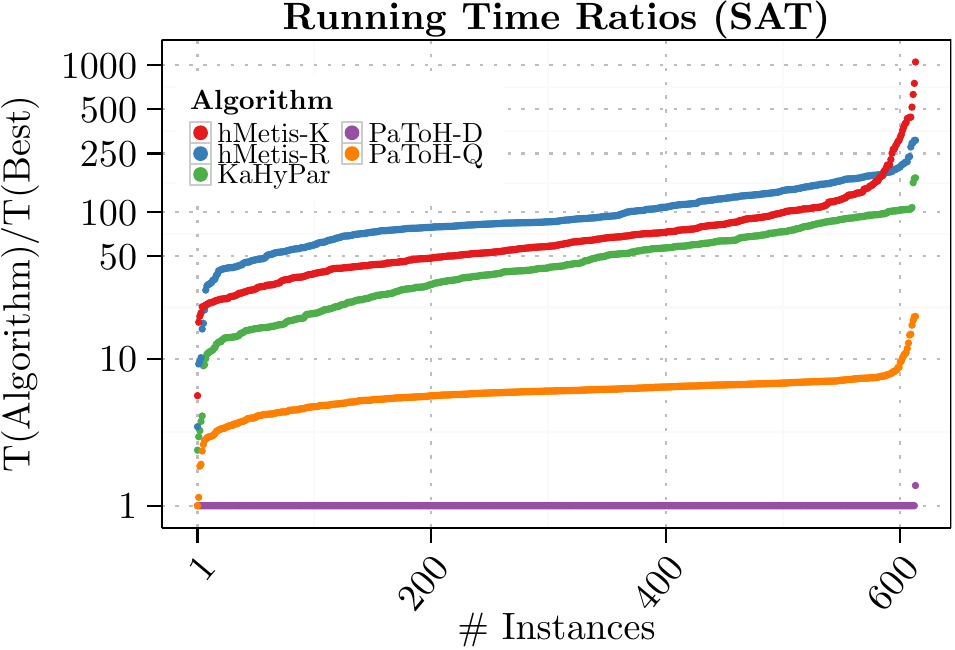} 
\includegraphics[width=.42\textwidth]{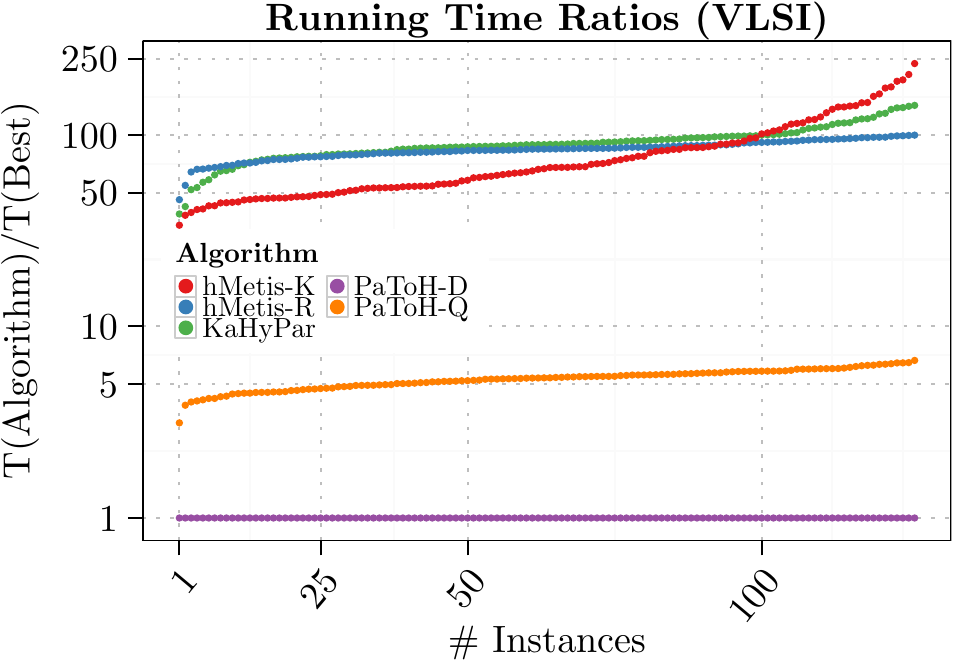} 

}

\end{knitrout}
\caption{Running time ratio plots for $\varepsilon=0.03$. The log-scaled y-axis shows the ratio between the running time of the corresponding algorithm and the running time of the fastest algorithm. For each algorithm, the ratios are sorted in increasing order.}\label{fig:running_time_overview}
\end{figure}

We conducted a Wilcoxon matched pairs signed rank test~\cite{wilcoxon} (using a 1\% significance level) to determine whether or not the difference of KaHyPar and the other algorithms is statistically significant. 
At a 1\% significance level, a z-score with $|Z| > 2.58$ is considered significant. We report the z-scores and the corresponding p-values.
Unless stated otherwise, the p-value reported was $ p < 2.2\cdot10^{-16}$.
For sparse matrices, the best cuts of KaHyPar are significantly better than those of hMetis-R ($Z=-13.861$),
hMetis-K ($Z=-17.953$), PaToH-Q ($Z=-26.894$) and PatoH-D ($Z=-26.273$). The difference of KaHyPar and hMetis-R was found to be not 
significant on SAT ($Z = 1.9943, p = 0.04612$) and VLSI instances ($Z = -0.4821, p = 0.6297$). Comparing the best solutions of KaHyPar
and hMetis-K, the difference was significant for SAT instances ($Z = -3.0483, p = 0.002302$) but not for VLSI instances ($Z = 0.18261, p = 0.8551$).
However, the solutions of KaHyPar were significantly better than PatoH-Q ($Z=-15.636$) and PaToH-D ($Z=-16.34$) on SAT instances
and VLSI instances (PatoH-Q: $Z=-9.5529$ and PaToH-D: $Z=-9.6125$).

\subsection{Effects of Imbalance Parameter.}\label{sec:different_imbalance}
For $\varepsilon=0.01$, $3087$ out of $7000$ partitions produced by hMetis-K were imbalanced (up to $14\%$ imbalance). 
hMetis-R produced 163 imbalanced partitions (up to $1.3\%$ imbalance), KaHyPar produced 58 imbalanced partitions (up to $2.5\%$ imbalance) 
and 205 partitions of PaToH-Q had up to $2\%$ imbalance. Only PaToH-D fulfilled the balance constraint in all experiments. 
For $\varepsilon=0.1$ hMetis-K produced 24 imbalanced partitions with up to $14\%$ imbalance. All other partitioners produced balanced partitions.
The overall performance is summarized in \autoref{fig:epsilon_overview}. Performance plots for each instance class can be found in \autoref{app:performanceplots}.
For an allowed imbalance of 1\% (left) the minimum cut ratios of KaHyPar are slightly better than hMetis-R. Due to the localized view of our local search algorithm it 
becomes difficult to find feasible moves around the uncontracted vertex pair that improve the solution quality.
However, our algorithm still outperforms hMetis-K, PaToH-D and PaToH-Q. If we allow up to 10\% imbalance (right), the performance of KaHyPar 
improves significantly. It now produces the best partitions for $431$ out of $700$ instances.

\section{Conclusions and Future Work}  \label{Conclusions}
We presented a novel $k$-way hypergraph partitioning algorithm that instantiates the multilevel paradigm in its most extreme version, by
removing only a single vertex between two levels. Our algorithm 
computes high quality partitions for a large set of hypergraphs derived from various application domains. 
Four key aspects yield a tool/algorithm that dominates the popular hMetis system in \emph{both} solution quality and running time: (i) active exploitation of the fact that we represent the hypergraph as a bipartite graph to derive an efficient hypergraph data structure, (ii) an engineered coarsening algorithm, (iii) a portfolio of initial partitioning algorithms and (iv) a highly tuned local search algorithm.

Several ideas exist to narrow the gap between the running time of KaHyPar and PaToH: Ignoring nets that are larger than a certain
threshold size during coarsening, as suggested by~\cite{PaToH}. The running time of local search could be improved by developing an adaptive stopping rule as in~\cite{nGP}.
With respect to quality we could introduce  V-cycles~\cite{hMetisRB,SanSch11,WalshawVcycle}
and an evolutionary algorithm along the lines of KaHIP~\cite{SanSch12}.
Generalizing the gain cache to direct $k$-way partitioning 
as in \cite{DirectKwayNhgp} might give good quality for large $k$ without incurring excessive performance penalties.
Having shown that our algorithm computes high quality partitions when optimizing the total cut size,  future work could also look at
different partitioning objectives that rely on a global view of the problem, like the $(\lambda - 1)$ or sum-of-external-degrees metric~\cite{hMetisKway}.

{
\bibliographystyle{unsrt}
\bibliography{library,refs-parco,diss}
}

\onecolumn
\appendix
\clearpage
\section{Parameter Tuning Instances} \label{app:tuninginstances}

\begin{table*}[h!]
\centering
   \caption{Properties of our benchmark set used for parameter tuning. The table is split into two groups: VLSI instances and sparse matrix instances. Within each group, the hypergraphs are sorted by $|V|$.}
\label{tbl:mediuminstances}
\begin{tabular}{m{1.7cm}|rrr|rrr|rrr}
\multirow{2}{*}{Hypergraph} & \multicolumn{1}{c}{\multirow{2}{*}{$|V|$}} & \multicolumn{1}{c}{\multirow{2}{*}{$|E|$}}  & \multicolumn{1}{c|}{\multirow{2}{*}{$|pins|$}}  & \multicolumn{3}{c|}{$d(v)$} & \multicolumn{3}{c}{$|e|$} \\
 & & & & \multicolumn{1}{c}{min} & \multicolumn{1}{c}{avg} & \multicolumn{1}{c|}{max} & \multicolumn{1}{c}{min} & \multicolumn{1}{c}{avg}& \multicolumn{1}{c}{max} \\
\cline{1-10}
ibm01 &  \numprint{12752} &  \numprint{14111} &  \numprint{ 50566} & 1 & 3.97 &   39 & 2 & 3.58 &  42 \\
ibm02 &  \numprint{19601} &  \numprint{19584} &  \numprint{ 81199} & 1 & 4.14 &   69 & 2 & 4.15 & 134 \\
ibm03 &  \numprint{23136} &  \numprint{27401} &  \numprint{ 93573} & 1 & 4.04 &  100 & 2 & 3.41 &  55 \\
ibm04 &  \numprint{27507} &  \numprint{31970} &  \numprint{105859} & 1 & 3.85 &  526 & 2 & 3.31 &  46 \\
ibm05 &  \numprint{29347} &  \numprint{28446} &  \numprint{126308} & 1 & 4.30 &    9 & 2 & 4.44 &  17 \\
ibm06 &  \numprint{32498} &  \numprint{34826} &  \numprint{128182} & 1 & 3.94 &   91 & 2 & 3.68 &  35 \\
ibm07 &  \numprint{45926} &  \numprint{48117} &  \numprint{175639} & 1 & 3.82 &   98 & 2 & 3.65 &  25 \\
\hline
vibrobox &  \numprint{12328} &  \numprint{12328} &  \numprint{ 342828} &  9 & 27.81 & 121 &  9 & 27.81 & 121 \\
bcsstk29 &  \numprint{13992} &  \numprint{13992} &  \numprint{ 619488} &  5 & 44.27 &  71 &  5 & 44.27 &  71 \\
memplus  &  \numprint{17758} &  \numprint{17758} &  \numprint{ 126150} &  2 &  7.10 & 574 &  2 &  7.10 & 574 \\
bcsstk30 &  \numprint{28924} &  \numprint{28924} &  \numprint{2043492} &  4 & 70.65 & 219 &  4 & 70.65 & 219 \\
bcsstk31 &  \numprint{35588} &  \numprint{35588} &  \numprint{1181416} &  2 & 33.20 & 189 &  2 & 33.20 & 189 \\
bcsstk32 &  \numprint{44609} &  \numprint{44609} &  \numprint{2014701} &  2 & 45.16 & 216 &  2 & 45.16 & 216 \\
\end{tabular}
\end{table*}

\section{Parameter Tuning Results} \label{app:parametertuning}
\begin{figure}[h!]
\centering
\begin{minipage}{.5\textwidth}
\vspace{0pt}
  \centering
  \includegraphics[width=.95\textwidth]{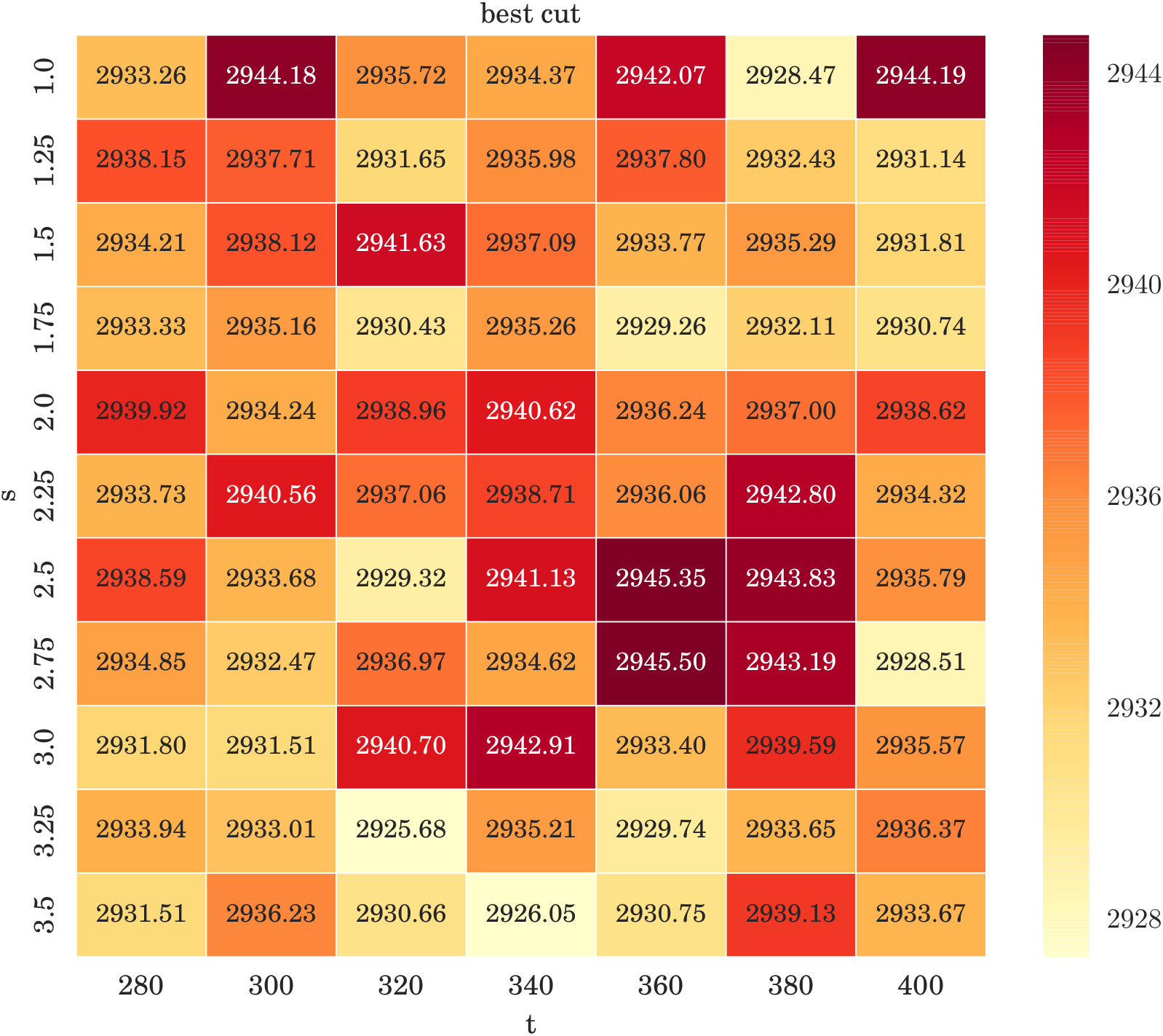}
\end{minipage}%
\begin{minipage}{.5\textwidth}
\vspace{-4pt}
  \centering
  \includegraphics[width=.95\textwidth]{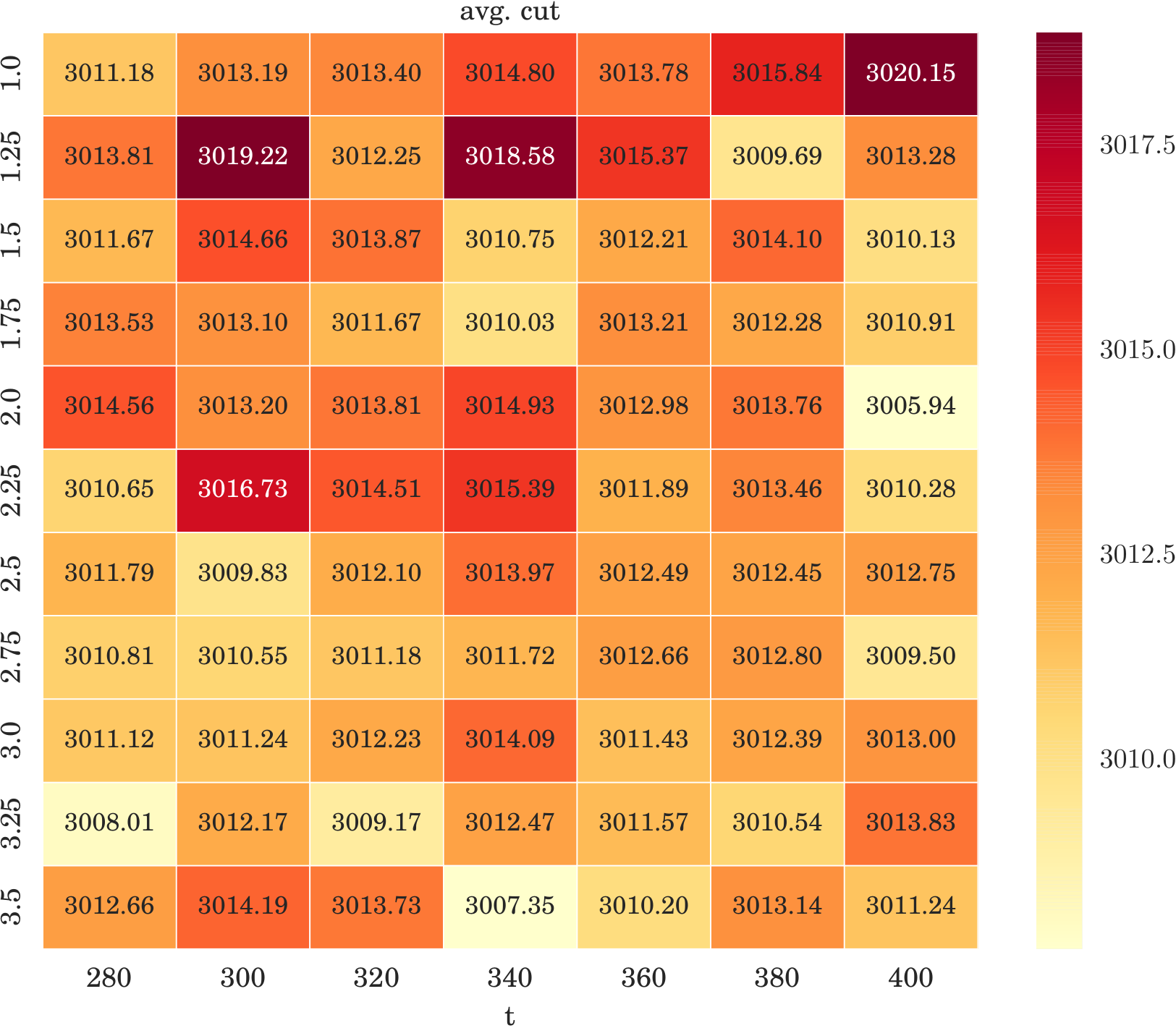}
\end{minipage}
\caption{Results of our parameter tuning experiments regarding the minimum size of the coarsest hypergraph $t$ and the scaling factor of maximum allowed vertex weight $s$. $t$-values
less than $280$ have been omitted because all of them resulted in worse solution quality. We set $s$ to $3.25$ and $t$ to $320$.} \label{tbl:sttuning}
\end{figure}

\begin{table}[t!]
\centering
\caption{Results of our parameter tuning experiments regarding the allowed maximum number of moves $c$ without improvement. Based on these experiments we set $c$ to $350$.}
\label{tbl:ctuning}
\begin{tabular}{rrrr}
   \multicolumn{1}{c}{$c$} & \multicolumn{1}{c}{avg. cut} & \multicolumn{1}{c}{best cut} & \multicolumn{1}{c}{t [$s$]} \\
\hline                                                      
  \numprint{10}            & \numprint{3044.8}            & \numprint{2969.7}            & \numprint{0.7}              \\
  \numprint{25}            & \numprint{3031.6}            & \numprint{2959.7}            & \numprint{1.3}              \\
  \numprint{50}            & \numprint{3026.1}            & \numprint{2959.2}            & \numprint{2.2}              \\
 \numprint{100}            & \numprint{3021.3}            & \numprint{2957.0}            & \numprint{3.9}              \\
 \numprint{150}            & \numprint{3020.4}            & \numprint{2954.7}            & \numprint{5.6}              \\
 \numprint{200}            & \numprint{3018.4}            & \numprint{2946.1}            & \numprint{7.3}              \\
 \numprint{250}            & \numprint{3017.4}            & \numprint{2945.3}            & \numprint{9.0}              \\
 \numprint{300}            & \numprint{3016.8}            & \numprint{2942.2}            & \numprint{10.9}             \\
 \textbf{\numprint{350}}   & \textbf{\numprint{3015.9}}   & \textbf{\numprint{2942.1}}   & \textbf{\numprint{12.7}}    \\
 \numprint{400}            & \numprint{3015.8}            & \numprint{2942.1}            & \numprint{14.9}             \\
 \numprint{450}            & \numprint{3015.2}            & \numprint{2943.9}            & \numprint{16.3}             \\
 \numprint{500}            & \numprint{3015.3}            & \numprint{2942.4}            & \numprint{17.8}             \\
\end{tabular}
\end{table}
\vspace{3cm}

\clearpage
\section{Excluded Test Instances} \label{app:excludedinstnaces}
\begin{table*}[h!]
\centering
\caption{Out of $2170$ test instances, we excluded the following $173$ instances either because PaToH-Q could not allocate enough memory or one of the other partitioners could
not partition the instances in the given time limit. The table is split into two groups: sparse matrix instances and SAT instances.}
\label{tbl:excluded}
\begin{tabular}{L{5cm}lllllll}
                                  & \multicolumn{7}{c}{$k$}                                                                                                                                                                                                                                                                                                 \\
Hypergraph                        & $2$                                           & $4$                                       & $8$                                       & $16$                                      & $32$                                      & $64$                                      & $128$                                     \\
\hline
12month1                          & $\triangle$                                   & $\triangle$                               & $\triangle$                               & $\triangle$                               & $\triangle$ \ding{109}                    & $\triangle$ \ding{109}                    & $\triangle$ \ding{109}                    \\
192bit                            & $\square$                                     &                                           &                                           &                                           &                                           &                                           &                                           \\
ASIC\_680k                        & $\triangle$                                   & $\triangle$                               & $\triangle$                               & $\triangle$                               & $\triangle$                               & $\triangle$                               & $\triangle$                               \\
ESOC                              & $\square$                                     & $\square$                                 &                                           &                                           & $\square$                                 & $\square$ \ding{109}                      & $\square$ \ding{109}                      \\
GL7d22                            & $\square$$\triangle$\ding{108} \ding{109}     & $\square$$\triangle$\ding{108} \ding{109} & $\square$$\triangle$\ding{108} \ding{109} & $\square$$\triangle$\ding{108} \ding{109} & $\square$$\triangle$\ding{108} \ding{109} & $\square$$\triangle$\ding{108} \ding{109} & $\square$$\triangle$\ding{108} \ding{109} \\
LargeRegFile                      & $\triangle$                                   & $\triangle$                               & $\triangle$                               & $\triangle$                               & $\triangle$                               & $\triangle$                               & $\triangle$                               \\ 
Rucci1                            &                                               &                                           &                                           &                                           & $\square$                                 &                                           &                                           \\
Trec14                            &                                               &                                           &                                           &                                           &                                           & \ding{109}                                & \ding{109}                                \\
appu                              &                                               &                                           &                                           &                                           &                                           & \ding{109}                                & \ding{109}                                \\
circuit5M                         & $\square$$\triangle$                          & $\square$$\triangle$                      & $\square$$\triangle$\ding{108}            & $\square$$\triangle$\ding{108}            & $\square$$\triangle$\ding{108} \ding{109} & $\square$$\triangle$\ding{108} \ding{109} & $\square$$\triangle$\ding{108} \ding{109} \\
gupta3                            &                                               &                                           &                                           &                                           &                                           &                                           & \ding{109}                                \\
hollywood-2009                    & $\triangle$                                   & $\triangle$\ding{108} \ding{109}          & $\triangle$\ding{108} \ding{109}          & $\triangle$\ding{108} \ding{109}          & $\triangle$\ding{108} \ding{109}          & $\triangle$\ding{108} \ding{109}          & $\triangle$\ding{108} \ding{109}          \\
human\_gene2                      &                                               &                                           &                                           &                                           & \ding{109}                                & \ding{109}                                & \ding{109}                                \\
kron\_g500-logn16                 &                                               &                                           &                                           &                                           & \ding{109}                                & \ding{109}                                & \ding{109}                                \\
n4c6-b11                          & $\square$                                     & $\square$                                 & $\square$                                 & $\square$                                 & $\square$                                 & $\square$                                 & $\square$                                 \\
nd12k                             &                                               &                                           &                                           &                                           &                                           &                                           & \ding{109}                                \\
nlpkkt120                         &                                               &                                           &                                           & \ding{108}                                & \ding{108}                                & \ding{108}                                & \ding{108}                                \\
rel9                              & $\square$                                     & $\square$$\triangle$\ding{108} \ding{109} & $\square$$\triangle$\ding{108} \ding{109} & $\square$$\triangle$\ding{108} \ding{109} & $\square$$\triangle$\ding{108} \ding{109} & $\square$$\triangle$\ding{108} \ding{109} & $\square$$\triangle$\ding{108} \ding{109} \\
sls                               & $\square$                                     & $\square$                                 & $\square$ \ding{109}                      & $\square$ \ding{109}                      & $\square$ \ding{109}                      & $\square$ \ding{109}                      & $\square$ \ding{109}                      \\
\hline                                                                                                                                                                                                                                                                                                                  
004-80-8                          & $\square$                                     & $\square$                                 & $\square$                                 & $\square$                                 & $\square$                                 & $\square$                                 & $\square$                              \\
005-80-12                         & $\square$                                     & $\square$                                 & $\square$                                 & $\square$                                 & $\square$                                 & $\square$                                 & $\square$                              \\
007-80-8                          & $\square$                                     & $\square$                                 & $\square$                                 & $\square$                                 & $\square$                                 & $\square$                                 & $\square$                              \\
008-80-12                         & $\square$                                     & $\square$                                 & $\square$                                 & $\square$                                 & $\square$                                 & $\square$                                 & $\square$                              \\
008-80-8                          & $\square$                                     & $\square$                                 & $\square$                                 & $\square$                                 & $\square$                                 & $\square$                                 & $\square$                              \\
010-80-12                         & $\square$                                     & $\square$                                 & $\square$                                 & $\square$                                 & $\square$                                 & $\square$                                 & $\square$                              \\
11pipe\_k                         &                                               &                                           & \ding{109}                                & \ding{109}                                & \ding{109}                                & \ding{109}                                & \ding{109}                             \\
11pipe\_q0\_k                     &                                               &                                           &                                           &                                           &                                           &                                           & \ding{109}                             \\
9vliw\_m\_9stages \_iq3\_C1\_bug10 & \ding{109}                                    & \ding{108} \ding{109}                     & \ding{108} \ding{109}                     & \ding{108} \ding{109}                     & $\triangle$\ding{108} \ding{109}          & $\triangle$\ding{108} \ding{109}          & $\triangle$\ding{108} \ding{109}               \\
9vliw\_m\_9stages \_iq3\_C1\_bug7  &                                               & \ding{108} \ding{109}                     & \ding{108} \ding{109}                     & $\triangle$\ding{108} \ding{109}          & $\triangle$\ding{108} \ding{109}          & $\triangle$\ding{108} \ding{109}          & $\triangle$\ding{108} \ding{109}               \\
9vliw\_m\_9stages \_iq3\_C1\_bug8  &                                               & \ding{108} \ding{109}                     & \ding{108} \ding{109}                     & \ding{108} \ding{109}                     & $\triangle$\ding{108} \ding{109}          & $\triangle$\ding{108} \ding{109}          & $\triangle$\ding{108} \ding{109}               \\
bjrb07amba 10andenv                & $\square$                                     & $\square$                                 & $\square$                                 & $\square$                                 & $\square$                                 & $\square$                                 & $\square$                              \\
blocks-blocks-37-1. 130-NOTKNOWN   &                                               & $\square$                                 & $\square$                                 & $\square$                                 & $\square$                                 & $\square$                                 & $\square$                              \\
q\_query\_3\_L150\_coli.sat       &                                               &                                           &                                           &                                           &                                           &                                           & \ding{109}                             \\
q\_query\_3\_L200\_coli.sat       &                                               &                                           & \ding{109}                                & \ding{109}                                & \ding{109}                                & $\triangle$ \ding{109}                    & $\triangle$ \ding{109}                          \\
velev-vliw-uns-2.0-uq5            &                                               &                                           &                                           &                                           &                                           &                                           & \ding{109}                             \\
\end{tabular}
\caption*{
    \begin{tabular}{|l l|}
      \hline
      $\triangle$~:               & KaHyPar exceeded time limit \\
      \ding{108}~:                & hMetis-R exceeded time limit \\
      \ding{109}~:                & hMetis-K exceeded time limit \\
      $\square$~:                 & PaToH-Q memory allocation error \\
      \hline
    \end{tabular}
  }
\end{table*}

\clearpage
\section{Geometric Mean Comparison}\label{app:geom}
\begin{table*}[h!]
  \centering
  \caption{Comparison with other systems for all 1997 test instances and $\varepsilon=0.03$. Cuts of hMetis and PaToH are shown as increase in \emph{percent} relative to KaHyPar.}  \label{tbl:allresults}
  \begin{tabular}{L{1.5cm}|SSR{.6cm}|SSR{.6cm}|SSR{.6cm}|SSR{.6cm}}
$\varepsilon=0.03$ & \multicolumn{3}{c|}{All Instances}           & \multicolumn{3}{c|}{Sparse Matrices}         & \multicolumn{3}{c|}{SAT}           & \multicolumn{3}{c}{VLSI}                                                                                                                                                                                                                                                                                                                                                     \\
Algorithm          & \multicolumn{1}{c}{avg.}                     & \multicolumn{1}{c}{best}                     & \multicolumn{1}{R{.6cm}|}{t [$s$]} & \multicolumn{1}{c}{avg.}                     & \multicolumn{1}{c}{best}                     & \multicolumn{1}{R{.6cm}|}{t [$s$]} & \multicolumn{1}{c}{avg.}             & \multicolumn{1}{c}{best}             & \multicolumn{1}{R{.6cm}|}{t [$s$]} & \multicolumn{1}{c}{avg.}            & \multicolumn{1}{c}{best}                     & \multicolumn{1}{R{.6cm}}{t [$s$]} \\
\hline                                                                                                                                                                                       
KaHyPar            & \multicolumn{1}{r}{\textbf{\numprint{6776}}} & \multicolumn{1}{r}{\textbf{\numprint{6581}}} & \numprint{37}                      & \multicolumn{1}{r}{\textbf{\numprint{4801}}} & \multicolumn{1}{r}{\textbf{\numprint{4696}}} & \numprint{30}                      & \multicolumn{1}{r}{\numprint{14821}} & \multicolumn{1}{r}{\numprint{14177}} & \numprint{59}                      & \multicolumn{1}{r}{\numprint{4694}} & \multicolumn{1}{r}{\textbf{\numprint{4575}}} & \numprint{29}                     \\
hMetis-R           & +7.91                                        & +8.80                                        & \numprint{51}                      & +14.52                                       & +14.53                                       & \numprint{35}                      & \multicolumn{1}{r}{\textbf{-2.75}}   & \multicolumn{1}{r}{\textbf{ -0.38}}  & \numprint{131}                     & \multicolumn{1}{r}{\textbf{-1.15}}  & +0.09                                        & \numprint{27}                     \\ 
hMetis-K           & +10.19                                       & +10.06                                       & \numprint{47}                      & +14.55                                       & +14.74                                       & \numprint{36}                      & +3.94                                & +2.97                                & \numprint{92}                      & -0.68                               & +0.41                                        & \numprint{23} \\
PaToH-Q            & +4.39                                        & \multicolumn{1}{c}{--}                       & \numprint{4}                       & +3.56                                        & \multicolumn{1}{c}{--}                       & \numprint{3}                       & +6.33                                & \multicolumn{1}{c}{--}               & \numprint{8}                       & +3.34                               & \multicolumn{1}{c}{--}                       & \numprint{2}  \\                                                                
PaToH-D            & +10.05                                       & +7.81                                        & \numprint{1}                       & +8.20                                        & +6.15                                        & \numprint{1}                       & +14.02                               & +11.47                               & \numprint{1}                       & +9.69                               & +6.94                                        & \numprint{1}  \\ 
\end{tabular}

\vspace{.5cm}

\begin{minipage}{0.45\textwidth}
\centering
\caption{All Benchmark Instances for $k=2$ and $\varepsilon=0.03$} \label{tbl:bipartitioning}
\begin{tabular}{lrrr}
Algorithm & \multicolumn{1}{c}{avg.}   & \multicolumn{1}{c}{best}   & \multicolumn{1}{c}{t [$s$]} \\
\hline                                                                    
KaHyPar    & \textbf{\numprint{1173}}   & \textbf{\numprint{1105}}   & \numprint{11.8}             \\
hMetis-R   & \numprint{+19.96}      & \numprint{+21.64}      & \numprint{20.6}             \\
hMetis-K   & \numprint{+22.72}      & \numprint{+25.13}      & \numprint{19.2}             \\
PaToH-Q    & \numprint{+7.97}       & \multicolumn{1}{c}{--}     & \numprint{1.5}              \\
PaToH-D    & \numprint{+11.48}      & \numprint{+8.19}       & \numprint{0.3}              \\
\end{tabular}
\end{minipage}
\begin{minipage}{0.45\textwidth}
\centering
\caption{Web-Crawls and Social Networks for $\varepsilon=0.03$} \label{tbl:spmwebsocial}
\begin{tabular}{lrrr}
Algorithm & \multicolumn{1}{c}{avg.} & \multicolumn{1}{c}{best} & \multicolumn{1}{c}{t [$s$]} \\
\hline                                                                    
KaHyPar    & \textbf{\numprint{6680}} & \textbf{\numprint{6269}} & \numprint{190}            \\
hMetis-R   & \numprint{+86.63}    & \numprint{+92.03}    & \numprint{198}           \\
hMetis-K   & \numprint{+66.82}    & \numprint{+71.64}    & \numprint{263}           \\
PaToH-Q    & \numprint{+ 9.03}    & \multicolumn{1}{c}{--}   & \numprint{7}             \\
PaToH-D    & \numprint{+17.88}    & \numprint{+18.72}    & \numprint{1}              \\
\end{tabular}
\end{minipage}

\vspace{.5cm}

  \caption{Comparison with other systems for 700 test instances and $\varepsilon=0.01$ (top) and  $\varepsilon=0.1$ (bottom).} \label{tbl:epsilon001_results}
  \begin{tabular}{L{1.5cm}|SSR{.6cm}|SSR{.6cm}|SSR{.6cm}|SSR{.6cm}}
$\varepsilon=0.01$ & \multicolumn{3}{c|}{All Instances}           & \multicolumn{3}{c|}{Sparse Matrices}         & \multicolumn{3}{c|}{SAT}           & \multicolumn{3}{c}{VLSI} \\
Algorithm          & \multicolumn{1}{c}{avg.}                     & \multicolumn{1}{c}{best}                     & \multicolumn{1}{R{.6cm}|}{t [$s$]} & \multicolumn{1}{c}{avg.}                     & \multicolumn{1}{c}{best}                     & \multicolumn{1}{R{.6cm}|}{t [$s$]} & \multicolumn{1}{c}{avg.}             & \multicolumn{1}{c}{best}             & \multicolumn{1}{R{.6cm}|}{t [$s$]} & \multicolumn{1}{c}{avg.}            & \multicolumn{1}{c}{best}            & \multicolumn{1}{R{.6cm}}{t [$s$]} \\
\hline                                                                                                                                                                                       
KaHyPar            & \multicolumn{1}{r}{\textbf{\numprint{6272}}} & \multicolumn{1}{r}{\textbf{\numprint{6065}}} & \numprint{37}                      & \multicolumn{1}{r}{\textbf{\numprint{4386}}} & \multicolumn{1}{r}{\textbf{\numprint{4280}}} & \numprint{29}                      & \multicolumn{1}{r}{\numprint{12367}} & \multicolumn{1}{r}{\numprint{11725}} & \numprint{54}                      & \multicolumn{1}{r}{\numprint{6992}} & \multicolumn{1}{r}{\numprint{6793}} & \numprint{51} \\
hMetis-R           & +8.77                                        & +9.74                                        & \numprint{60}                      & +19.44                                       & +19.46                                       & \numprint{41}                      & \multicolumn{1}{r}{\textbf{-6.43}}   & \multicolumn{1}{r}{\textbf{ -4.06}}  & \numprint{136}                     & \multicolumn{1}{r}{\textbf{-2.56}}  & \multicolumn{1}{r}{\textbf{-1.28}}  & \numprint{52} \\ 
hMetis-K           & +10.63                                       & +10.92                                       & \numprint{44}                      & +19.81                                       & +19.26                                       & \numprint{33}                      & -2.26                                & -0.73                                & \numprint{82}                      & -0.62                               & +0.11                               & \numprint{38}                     \\
PaToH-Q            & +1.77                                        & \multicolumn{1}{c}{--}                       & \numprint{4}                       & +1.73                                        & \multicolumn{1}{c}{--}                       & \numprint{3}                       & +1.69                                & \multicolumn{1}{c}{--}               & \numprint{8}                       & +2.30                               & \multicolumn{1}{c}{--}              & \numprint{3}                      \\                                                                
PaToH-D            & +10.44                                       & +7.35                                        & \numprint{1}                       & +9.07                                        & +6.41                                        & \numprint{1}                       & +13.06                               & +9.06                                 & \numprint{1}                       & +10.94                              & +7.88                               & \numprint{1}  \\ 
\end{tabular}

\vspace{.5cm}

\begin{tabular}{L{1.5cm}|SSR{.6cm}|SSR{.6cm}|SSR{.6cm}|SSR{.6cm}}
$\varepsilon=0.1$ & \multicolumn{3}{c|}{All Instances}           & \multicolumn{3}{c|}{Sparse Matrices}         & \multicolumn{3}{c|}{SAT}           & \multicolumn{3}{c}{VLSI}                                                                                                                                                                                                                                                                                                                                                     \\
Algorithm         & \multicolumn{1}{c}{avg.}                     & \multicolumn{1}{c}{best}                     & \multicolumn{1}{R{.6cm}|}{t [$s$]} & \multicolumn{1}{c}{avg.}                     & \multicolumn{1}{c}{best}                     & \multicolumn{1}{R{.6cm}|}{t [$s$]} & \multicolumn{1}{c}{avg.}             & \multicolumn{1}{c}{best}                      & \multicolumn{1}{R{.6cm}|}{t [$s$]} & \multicolumn{1}{c}{avg.}            & \multicolumn{1}{c}{best}            & \multicolumn{1}{R{.6cm}}{t [$s$]} \\
\hline                                                                                                                                                                                       
KaHyPar           & \multicolumn{1}{r}{\textbf{\numprint{5847}}} & \multicolumn{1}{r}{\textbf{\numprint{5676}}} & \numprint{36}                      & \multicolumn{1}{r}{\textbf{\numprint{4154}}} & \multicolumn{1}{r}{\textbf{\numprint{4061}}} & \numprint{28}                      & \multicolumn{1}{r}{\numprint{11132}} & \multicolumn{1}{r}{\textbf{\numprint{10626}}} & \numprint{52}                      & \multicolumn{1}{r}{\numprint{6597}} & \multicolumn{1}{r}{\numprint{6455}} & \numprint{49}                     \\
hMetis-R          & +11.74                                       & +12.60                                       & \numprint{58}                      & +21.79                                       & +21.56                                       & \numprint{40}                      & \multicolumn{1}{r}{\textbf{-2.50}}   & +0.20                                         & \numprint{132}                     & +0.34                               & +0.96                               & \numprint{50}                     \\ 
hMetis-K          & +11.33                                       & +11.74                                       & \numprint{44}                      & +20.12                                       & +20.21                                       & \numprint{34}                      & -0.62                                & +0.37                                         & \numprint{81}                      & \multicolumn{1}{r}{\textbf{-0.78}}  & \multicolumn{1}{r}{\textbf{-0.54}}  & \numprint{38}                     \\
PaToH-Q           & +9.16                                        & \multicolumn{1}{c}{--}                       & \numprint{4}                       & +7.41                                        & \multicolumn{1}{c}{--}                       & \numprint{3}                       & +12.98                               & \multicolumn{1}{c}{--}                        & \numprint{8}                       & +8.43                               & \multicolumn{1}{c}{--}              & \numprint{3}                      \\                                                                
PaToH-D           & +13.02                                       & +10.15                                       & \numprint{1}                       & +11.20                                       & +8.58                                        & \numprint{1}                       & +16.29                               & +13.27                                         & \numprint{1}                       & +14.37                              & +10.34                              & \numprint{1}  \\ 
\end{tabular}
\end{table*}

\clearpage
\section{Comparison for Different Imbalance Values} \label{app:performanceplots}
\begin{figure*}[h!]
\centering
\begin{knitrout}
\definecolor{shadecolor}{rgb}{0.969, 0.969, 0.969}\color{fgcolor}

{\centering \includegraphics[width=.45\textwidth]{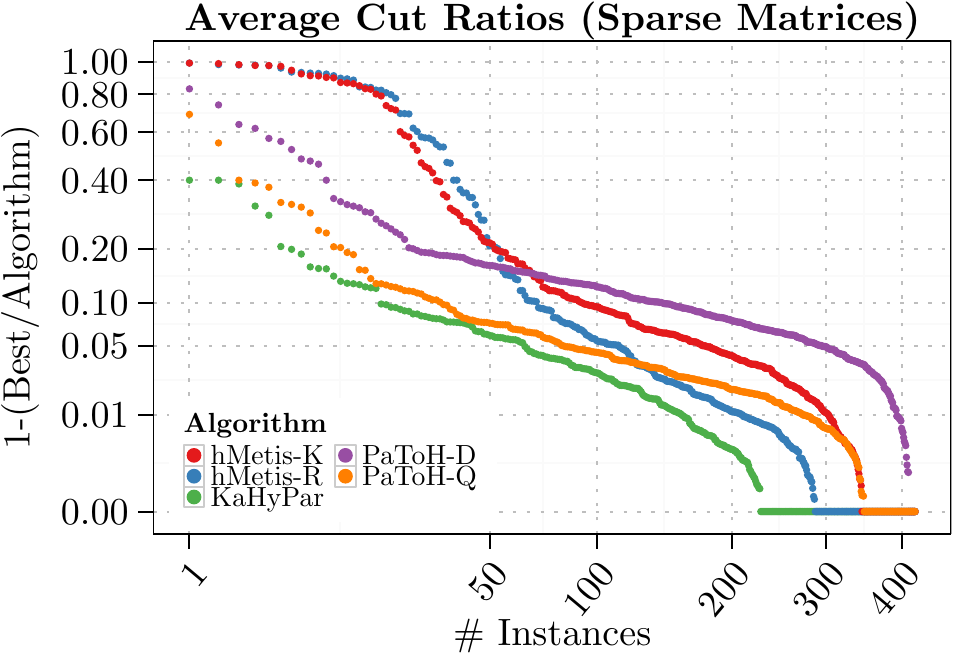} 
\includegraphics[width=.45\textwidth]{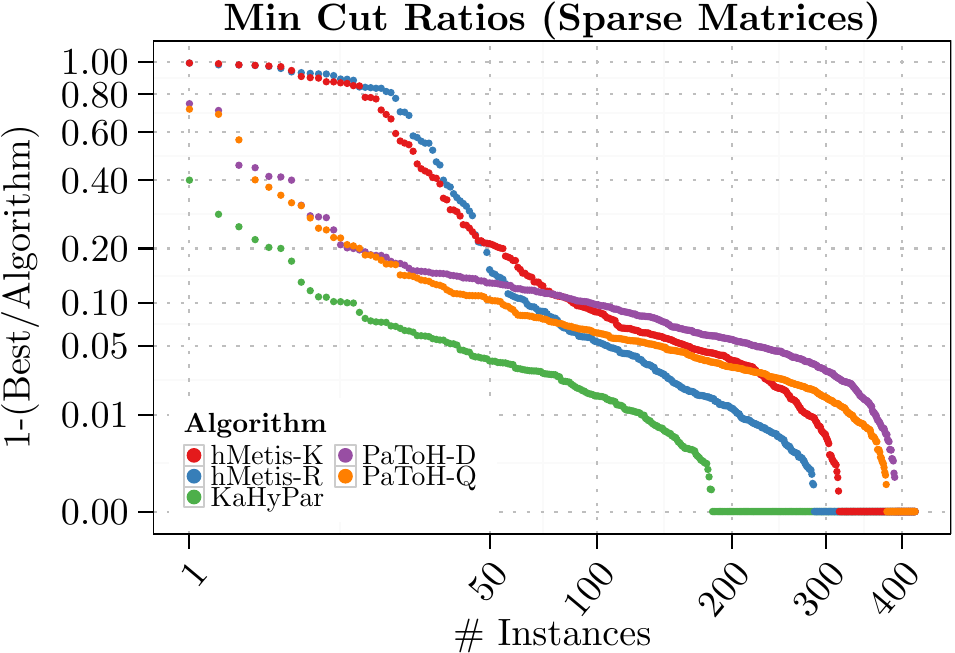} 
\includegraphics[width=.45\textwidth]{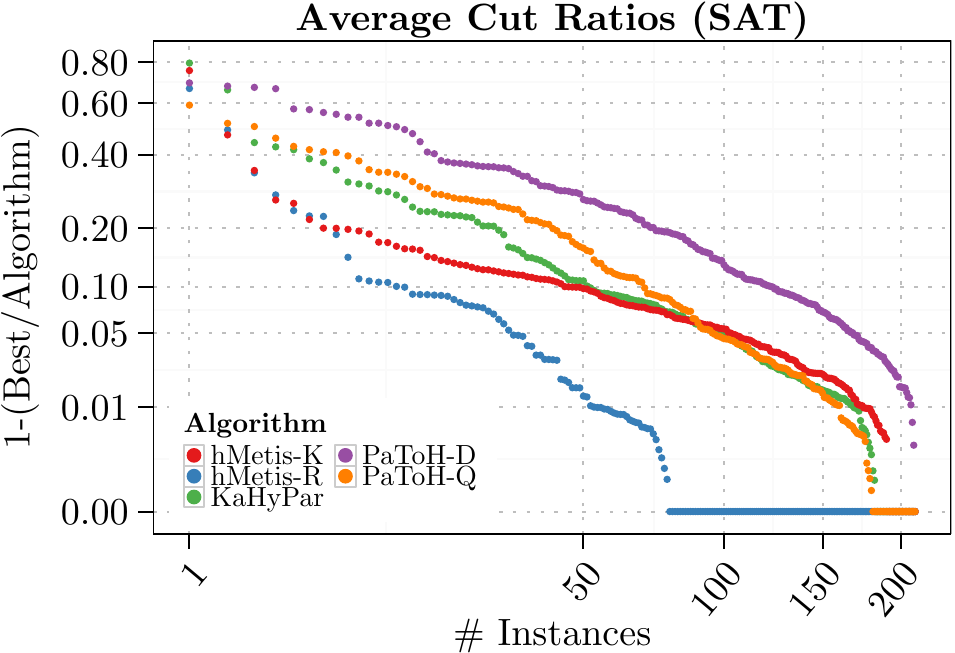} 
\includegraphics[width=.45\textwidth]{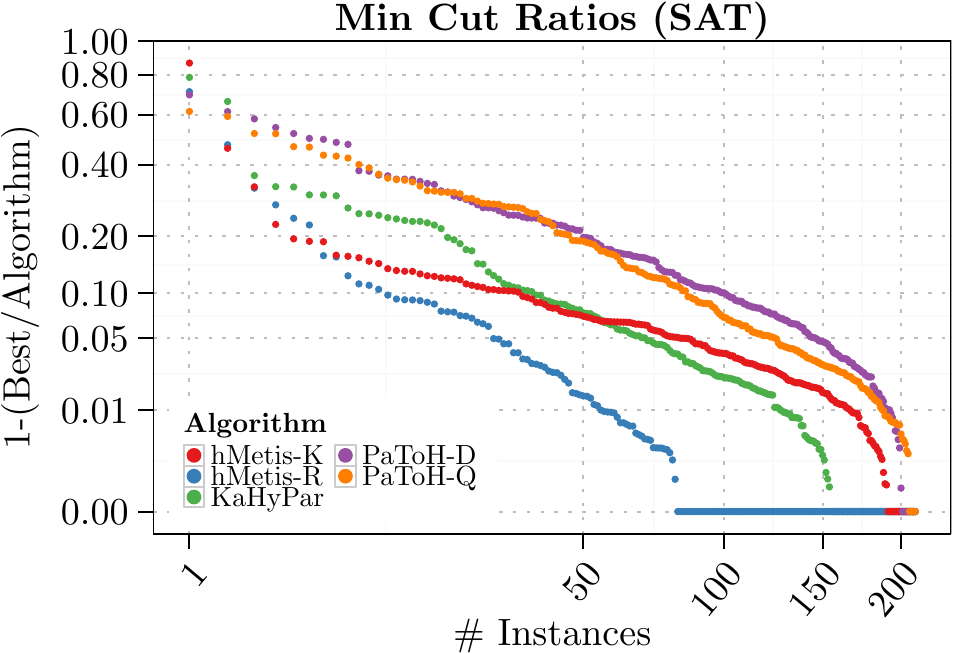} 
\includegraphics[width=.45\textwidth]{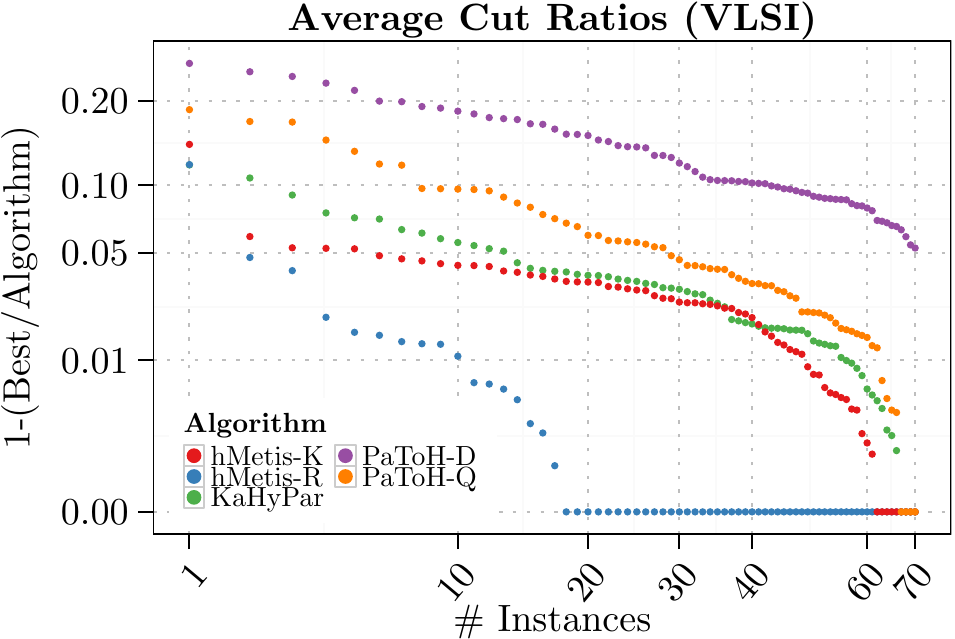} 
\includegraphics[width=.45\textwidth]{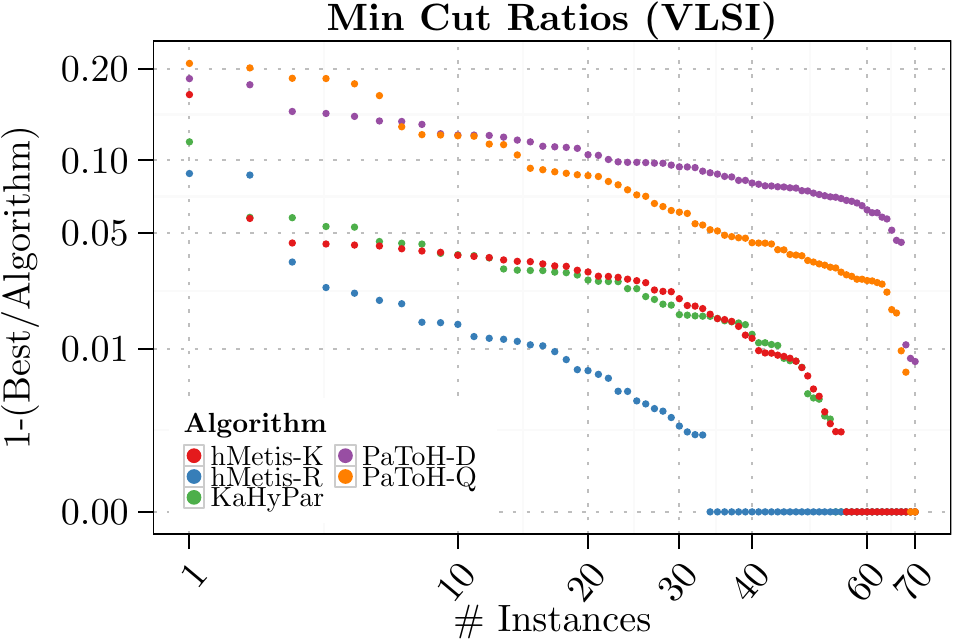} 

}
\end{knitrout}
\caption{Performance plots per benchmark set for $\varepsilon=0.01$. Note the cube root scale for both axes.}\label{fig:epsilon_001}
\end{figure*}

\begin{figure*}[h!]
\centering
\begin{knitrout}
\definecolor{shadecolor}{rgb}{0.969, 0.969, 0.969}\color{fgcolor}

{\centering \includegraphics[width=.45\textwidth]{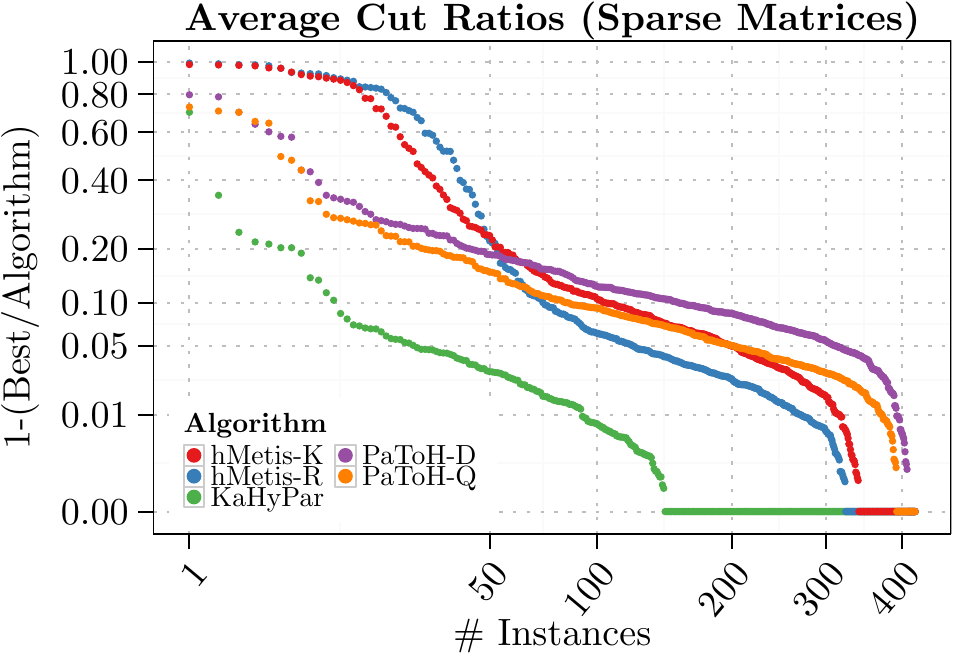} 
\includegraphics[width=.45\textwidth]{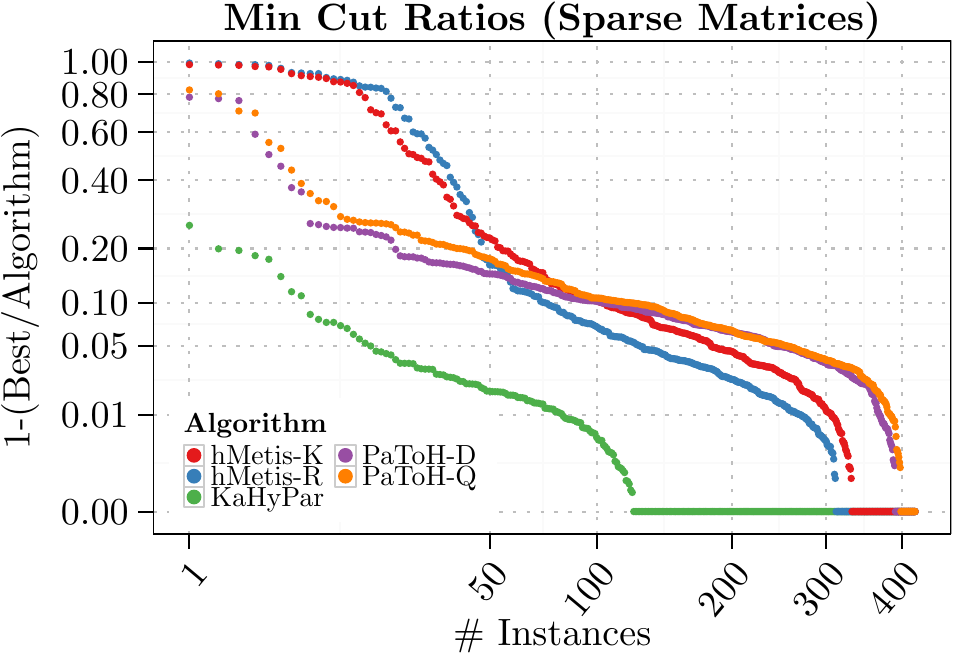} 
\includegraphics[width=.45\textwidth]{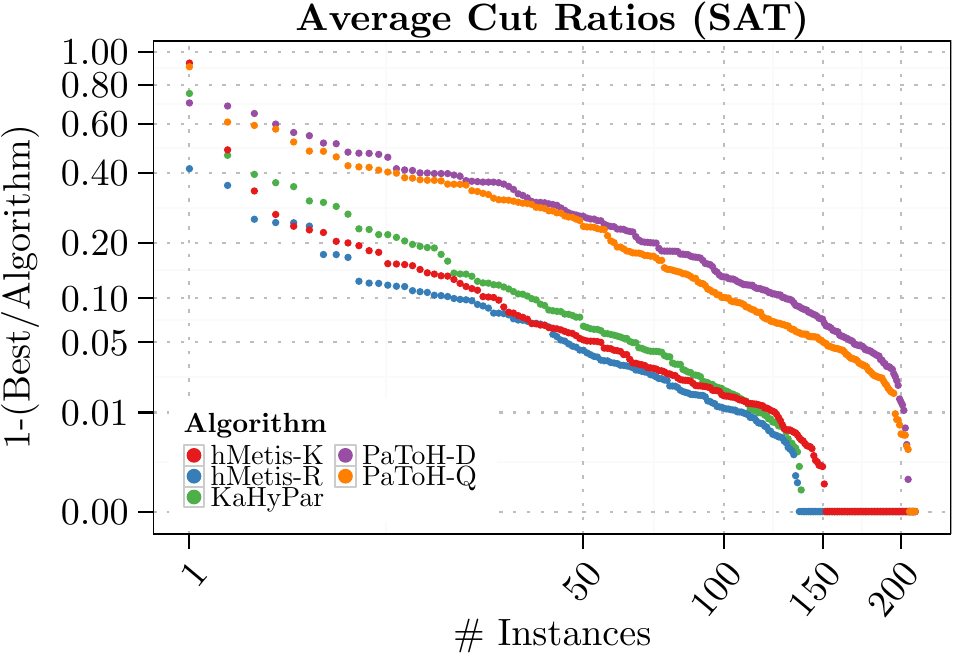} 
\includegraphics[width=.45\textwidth]{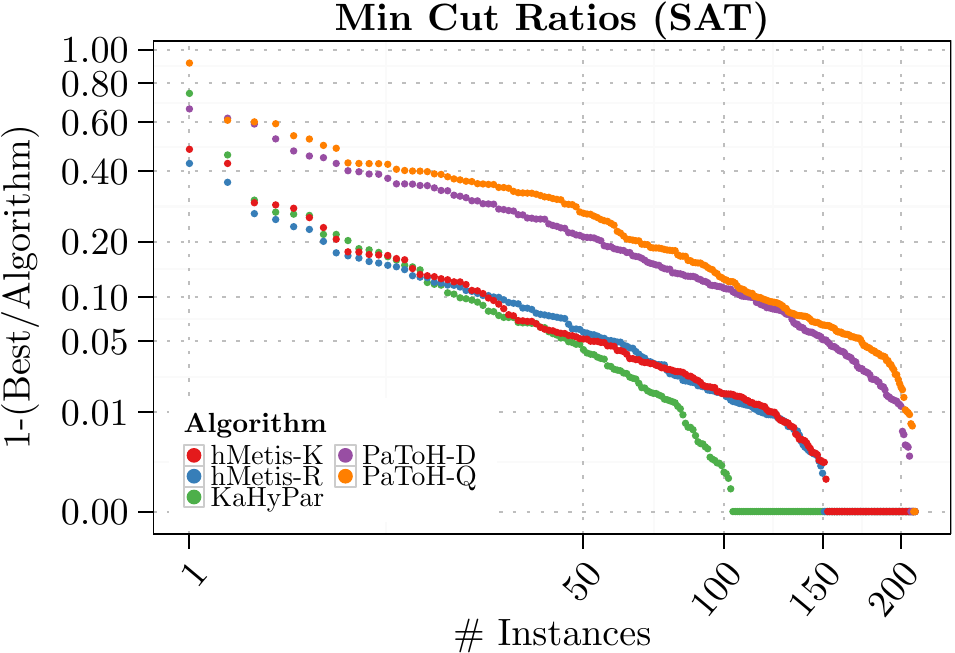} 
\includegraphics[width=.45\textwidth]{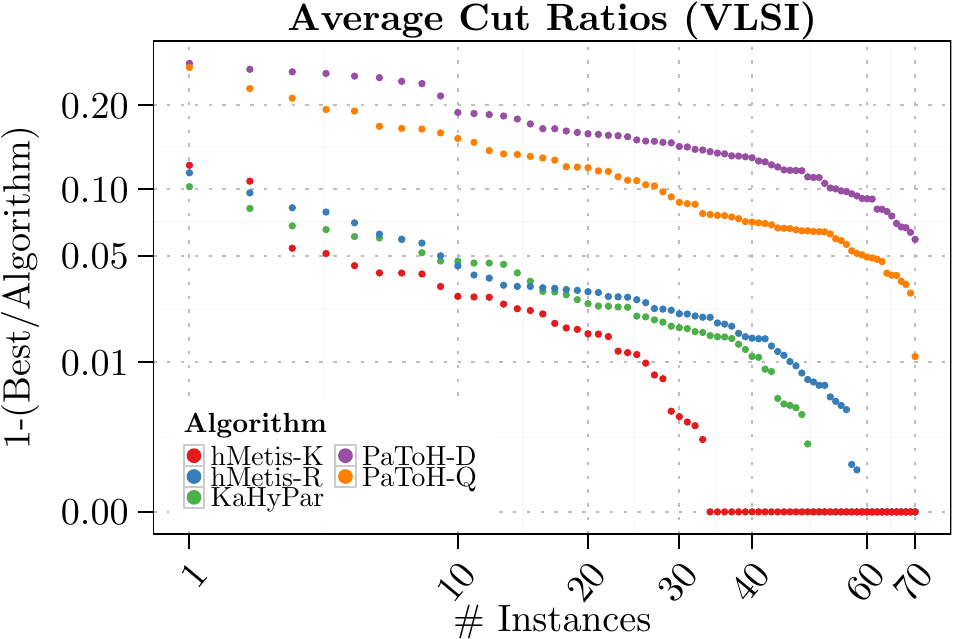} 
\includegraphics[width=.45\textwidth]{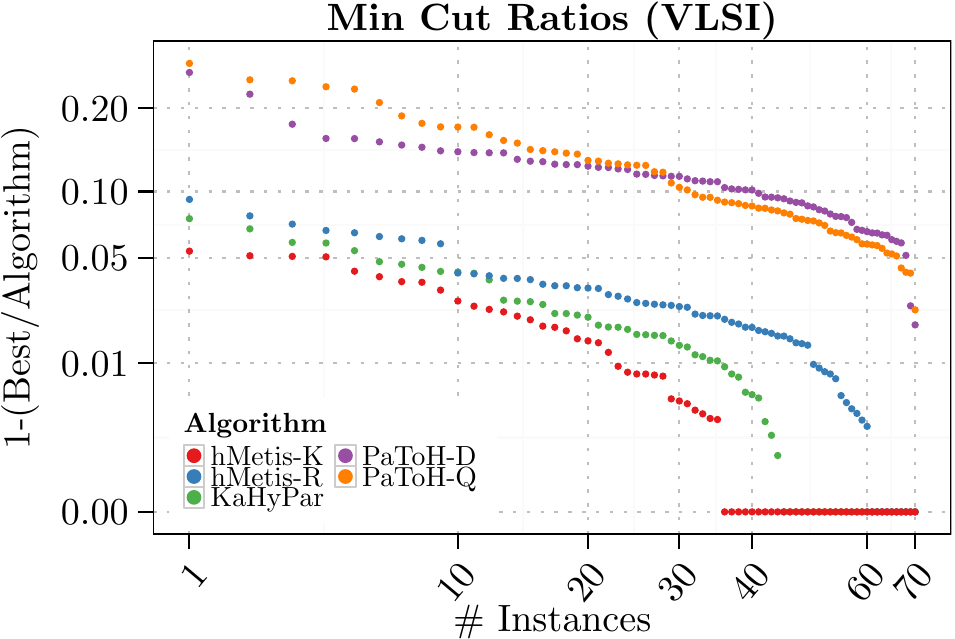} 

}

\end{knitrout}
\caption{Performance plots per benchmark set for $\varepsilon=0.1$. Note the cube root scale for both axes.}\label{fig:epsilon_01}
\end{figure*}

\end{document}